\newtheorem{thm}{Theorem}
\newtheorem{cor}{Corollary}
\newcommand{\E}{{\cal{E}}}
\newcommand{\gl}{\gamma_{\mathrm{L}}}
\newcommand{\gu}{\gamma_{\mathrm{U}}}
\def\E{\mathbb{E}}
\def\P{\mathbb{P}}
\def\1{\bm{1}}
\begin{document}

\title{The Sample Complexity of Search \\ over Multiple Populations}
\author{\IEEEauthorblockN{Matthew L. Malloy, \emph{Member, IEEE}, Gongguo Tang, \emph{Member, IEEE}, Robert D. Nowak, \emph{Fellow, IEEE}
 \thanks{Manuscript received September 06, 2012; revised March 27, 2013; accepted April 08, 2013. This work was partially supported by AFOSR grant FA9550-09-1-0140, NSF grant CCF-1218189 and the DARPA KECoM program.  The material in this paper was presented in part at the Conference on Information Sciences and Systems (CISS) in Princeton, USA \cite{malloyCISS}, March 2012.
\newline \indent M. L. Malloy, G. Tang, and R. D. Nowak are with the Department of Electrical and Computer Engineering, University of Wisconsin, Madison, WI 53715 USA (e-mail: mmalloy@wisc.edu, gtang5@wisc.edu, nowak@engr.wisc.edu).
}}}

\maketitle

\begin{abstract}
This paper studies the sample complexity of searching over multiple populations. We consider a large number of populations, each corresponding to either distribution $P_0$ or $P_1$.   
The goal of the search problem studied here is to find one population corresponding to distribution $P_1$ with as few samples as possible. The main contribution is to quantify the number of samples needed to correctly find one such population.  We consider two general approaches: non-adaptive sampling methods, which sample each population a predetermined number of times until a population following $P_1$ is found, and adaptive sampling methods, which employ sequential sampling schemes for each population.  We first derive a lower bound on the number of samples required by any sampling scheme.  We then consider an adaptive procedure consisting of a series of sequential probability ratio tests, and show it comes within a constant factor of the lower bound.  We give explicit expressions for this constant when samples of the populations follow Gaussian and Bernoulli distributions.   An alternative adaptive scheme is discussed which does not require full knowledge of $P_1$, and comes within a constant factor of the optimal scheme. For comparison, a lower bound on the sampling requirements of any non-adaptive scheme is presented. 


\end{abstract}

\vspace{.2cm}
\noindent 
\begin{keywords} Quickest search, rare events, SPRT, CUSUM procedure, sparse recovery, sequential analysis, sequential thresholding, biased coin,  spectrum sensing, multi-armed bandit.
\end{keywords}

\section{Introduction}
This paper studies the sample complexity of finding a population corresponding to some distribution $P_1$ among a large number of populations corresponding to either distribution $P_0$ or $P_1$.
More specifically, let $i=1,2,\dots$ index the populations. Samples of each population follow one of two distributions, indicated by a binary label $X_i$: if $X_i = 0$, then samples of population $i$ follow distribution $P_0$, if $X_i = 1$, then samples follow distribution $P_1$.   
We assume that $X_1,X_2,\dots$ are independently and identically distributed (i.i.d.) Bernoulli random variables with $\P(X_i=0)=1-\pi$ and $\P(X_i=1)=\pi$.  Distribution $P_1$ is termed the \emph{atypical} distribution, which corresponds to \emph{atypical} populations, and the probability $\pi$ quantifies the occurrence of such populations.  The goal of the search problem studied here is to find an atypical population with as few samples as possible.

In this search problem, populations are sampled a (deterministic or random) number of times, in sequence, until an atypical population is found.  The total number of samples needed is a function of the sampling strategy, the distributions, the required reliability, and  $\pi$.  To build intuition, consider the following.  As the occurrence of the atypical populations becomes infrequent, (i.e. as $\pi \rightarrow 0$), the number of samples required to find one such population must, of course, increase.   If $P_0$ and $P_1$ are extremely different (e.g., non-overlapping supports), then a search procedure could simply proceed by taking one sample of each population until an atypical population was found.  The procedure would identify an atypical population with, on average,  $\pi^{-1}$ samples.  More generally, when the two distributions are more difficult to distinguish, as is the concern of this paper, we must take multiple samples of some populations.  As the required reliability of the search increases, a procedure must also take more samples to confirm, with increasing certainty, that an atypical population has been found.  

The main contribution of this work is to quantify the number of samples needed to correctly find one atypical population.   {Specifically, we provide matching upper and lower bounds  (to within a constant factor) on the expected number of samples required to find a population corresponding to $P_1$ with a specified level of certainty.  }
We pay additional attention to this sample complexity as $\pi$ becomes small (and the occurrence of the atypical populations becomes \emph{rare}). We consider two general approaches to find an atypical population, both of which sample populations in sequence.  \emph{Non-adaptive} procedures sample each population a predetermined number of times, make a decision, and if the null hypothesis is accepted then move on to the next population.  \emph{Adaptive} methods, in contrast, enjoy the flexibility to sample each population sequentially, and thus, the decision to continue sampling a particular population can be based on prior samples.  

The developments in this paper proceed as follows. First, using techniques from sequential analysis, we derive a lower bound on the expected number of samples needed to reliably identify an atypical population.  To preview the results, the lower bound implies that any procedure (adaptive or non-adaptive) is unreliable if it uses fewer than $\pi^{-1} D(P_0||P_1)^{-1}$ samples on average,  where $D(P_0||P_1)$ is the Kullback-Leibler divergence. We then prove this is tight by showing that a series of sequential probability ratio tests (which we abbreviate as an S-SPRT) succeeds with high probability if the total number of  samples is within a constant factor of the lower bound, provided a minor constraint on the log-likelihood statistic is satisfied (which holds for bounded distributions, Gaussian, exponential, among others).  We give explicit expressions for this constant in the Gaussian and Bernoulli cases.  In the Bernoulli case, the bound derived by instantiating our general results produces the tightest known bound.
In many real world problems, insufficient knowledge of the distributions $P_0$ and $P_1$ makes implementing an S-SPRT impractical. To address this shortcoming, we propose a more practical adaptive procedure known as sequential thresholding, which does not require precise knowledge of $P_1$, and is particularly well suited for problems in which occurrence of an atypical population is rare.  We show sequential thresholding is within a constant factor of optimal in terms of dependence on the problem parameters as $\pi \rightarrow 0$.  Both the S-SPRT procedure and sequential thresholding are shown to be robust to imperfect knowledge of $\pi$. Lastly, we show that non-adaptive procedures require at least $\pi^{-1}  D(P_1||P_0)^{-1} \log\pi^{-1}$ samples to reliably find an atypical population, a factor of $\log\pi^{-1}$ more samples when compared to adaptive methods.

\subsection{Motivating Applications}

Finding an atypical population arises in many relevant problems in science and engineering.  One of the main motivations for our work is the problem of spectrum sensing in cognitive radio.  In cognitive radio applications, one is interested in finding a vacant radio channel among a potentially large number of occupied channels.   Only once a vacant channel is identified can the cognitive device transmit, and thus, identifying a vacant channel as quickly as possible is of great interest.  A number of works have looked at various adaptive methods for spectrum sensing in similar contexts, including \cite{tajer2012adaptive, zhang2010adaptive, laiquickest}.  

Another captivating example is  the \emph{Search for Extraterrestrial Intelligence} (SETI) project.  Researchers at the SETI institute use large antenna arrays to sense for narrowband electromagnetic energy from distant star systems, with the hopes of finding extraterrestrial intelligence with technology similar to ours.  The search space consists of a virtually unlimited number of stars, over 100 billion in the Milky Way alone, each with 9 million potential ``frequencies'' in which to sense for narrow band energy.  The prior probability of extraterrestrial transmission is indeed very small (SETI has yet to make a contact), and thus occurrence of atypical populations is rare.    Roughly speaking, SETI employs a variable sample size search procedure that repeatedly tests energy levels against a threshold up to five times \cite{SETI, NYtimes}.  If any of the measurements are below the threshold, the procedure immediately passes to the next frequency/star.  This procedure is closely related to \emph{sequential thresholding} \cite{MalloyISIT}. Sequential thresholding results in substantial gains over fixed sample size procedures and, unlike the SPRT, it can be implemented without perfect knowledge of $P_1$.

\subsection{Related Work}

The prior work most closely related to the problem investigated here is that by Lai, Poor, Xin, and Georgiadis \cite{5961845}, in which the authors also examine the problem of quickest search across multiple populations, but do not focus on quantifying the sample complexity.  The authors show that the S-SPRT (also termed a CUSUM test) minimizes a linear combination of the expected number of samples and the error probability.  Complementary to this, our contributions include providing tight lower bounds on the expected number of samples required to achieve a desired probability of error, and then showing the sample complexity of the S-SPRT comes within a constant of this bound.   This quantifies how the number of samples required to find an atypical population depends on the distributions $P_0$ and $P_1$ and the probability $\pi$, which was not explicitly investigated in \cite{5961845}.  As a by-product, this proves the optimality of the S-SPRT.


An instance of the quickest search problem was also studied recently in \cite{2012arXiv1202.3639C}, where the authors investigate the problem of finding a biased coin with the fewest flips.   Our more general results are derived using different techniques, and cover this case with $P_0$ and $P_1$ as Bernoulli distributions.  In \cite{2012arXiv1202.3639C}, the authors present a bound on the expected number of flips needed to find a biased coin.  The bound derived from instantiating our more general theory  (see example 2 and Corollary \ref{corr:baisedcoin}) is a minimum of 32 times tighter than the bound in \cite{2012arXiv1202.3639C}.


Also closely related is the problem of sparse signal support recovery from point-wise observations \cite{MalloyISIT, Haupt, AsilomarLimits, malloy2012sequentialJ}, classical work in optimal scanning theory \cite{1057718, 1053868}, and work on pure exploration in multi-armed bandit problems \cite{bubeck2009pure, mannor2004sample}.  The sparse signal recovery problems differ in that the total number of populations is finite, and the objective is to recover all (or most) populations following $P_1$, as opposed to finding a single population and terminating the procedure.  Traditional multi-armed bandit problems differ in that no knowledge of the distributions of the arms is assumed.

\section{Problem Setup}
Consider an infinite number of populations indexed by $i =1,2,...$. For each population $i$, samples of that population are distributed either 
\begin{eqnarray}  \nonumber
Y_{i,j} &\overset{iid}{\sim}& P_0 \; \; \text{if} \; \: X_i =0 
\quad \text{or} \quad \\ \nonumber
Y_{i,j} &\overset{iid}{\sim}& P_1  \; \; \text{if} \; \: X_i =1
\end{eqnarray}
where $P_0$ and $P_1$  are probability measures supported on $\mathcal{Y}$, $j$ indexes multiple i.i.d. samples of a particular population, and $X_i$ is a binary label.   The goal is to find a population $i$ such that $X_i = 1$ as quickly and reliably as possible.  The prior probability of a particular population $i$ following $P_1$ or $P_0$ is i.i.d., and denoted
\begin{eqnarray} \nonumber
\mathbb{P}\left( X_i =1 \right) &=& \pi \\
\mathbb{P}\left( X_i=0 \right) &=& 1-\pi \nonumber
\end{eqnarray}
where we assume $\pi \leq 1/2$ without loss of generality.

A testing procedure samples a subset of the populations and returns a single index, denoted $I$.  The performance of any testing procedure is characterized by two metrics: \emph{1)} the expected number of samples required for the procedure to terminate, denoted $\mathbb{E}[N]$, and \emph{2)} the probability of error, defined as
\begin{eqnarray} \nonumber
P_e := \mathbb{P}\left( I \in \left\{i: X_i =0 \right\} \right). 
\end{eqnarray}  
 In words, $P_e$ is the probability a procedure returns an index that does not correspond to a population following $P_1$.

\begin{figure}
\vspace*{3mm}\hrule \vspace*{1mm}
Search for an atypical population
\vspace*{1mm}
\hrule \vspace*{1mm}
\begin{algorithmic}[] \label{alg:sds}
\STATE{initialize: $i = 1$, $j=1$}
\WHILE{atypical population not found}
\STATE{\bf{either}}
\STATE{\hspace{.2cm} {\bf{1) sample $Y_{i,j}$}}, set $j = j+1$}
\STATE{\hspace{.2cm} {\bf{2) move to next population}}: $i = i+1$, $j=1$}
\STATE{\hspace{.2cm} {\bf{3) terminate}}: $\hat{X}_i = 1$}
\ENDWHILE
\STATE{output: $I = i$}
\end{algorithmic}
\vspace*{1mm}\hrule \vspace*{1mm}
\caption{Search for an atypical population.  The search procedures under consideration are restricted to the above framework.
\label{fig:frame} }
\end{figure}

In order to simplify analysis, we make two assumptions on the form of the procedures under consideration. 

\vspace{.2cm}
\noindent {\bf{Assumption 1.}} Search procedures follow the framework of Fig. \ref{fig:frame}. Specifically, a procedure starts at population $i=1$.  Until termination, a procedure then $\emph{1)}$ takes a sample of $i$, or $\emph{2)}$ moves to index $i+1$, or \emph{3)}, terminates, declaring population $i$ as following distribution $P_1$ (deciding $\hat{X}_i = 1$).  

\vspace{.2cm} \noindent
Assumption 1 implies procedures do not revisit populations. It can be argued that this restricted setting has no loss of optimality when $P_1$, $P_0$, and $\pi$ are known; in this Bayesian setting, the posterior probability of population $i$ depends only on samples of that index.  This posterior reflects the probability of error if the procedure were to terminate and estimate $\hat{X}_i = 1$.  Since this probability is not affected by samples of other indices, for any procedure that \emph{returns} to re-measure a population, there is a procedure requiring fewer samples with the same probability of error that either did \emph{not} return to index $i$, or did not move away from index $i$ in the first place.   Note that \cite{5961845} makes the same assumption.

The second assumption we make is on the invariance of the procedure across indices.  To be more specific, imagine that a procedure is currently sampling index $i$. For a given sampling procedure,  if $X_i = 1$, the probability the procedure passes to index $i+1$ without terminating is denoted $\beta$, and the probability the procedure correctly declares $\hat{X}_i=1$ is $1-\beta$.   Likewise, for any $i$ such that $X_i =0$, the procedure falsely declares $\hat{X}_i = 1$ with probability $\alpha$, and continues to index $i+1$ with probability $1 -\alpha$.  

\vspace{.2cm}
\noindent {\bf{Assumption 2.} }
$\alpha$ and $\beta$ are invariant as the procedure progresses; i.e., they are not functions of the index under consideration.

\vspace{.2cm}
\noindent Under Assumption 2, provided the procedure arrives at population $i$, we can write
\begin{eqnarray} \nonumber
\beta &=& \mathbb{P}(\hat{X}_i = 0 | X_i = 1) \\ \nonumber
\alpha &=&\mathbb{P}(\hat{X}_i = 1 | X_i = 0).
\end{eqnarray}
Note that this restriction has no loss of optimality as the known optimal procedure \cite{5961845} has this form.  Restricted to the above framework, a procedure consists of a number of simple binary hypothesis tests, each with false positive probability $\alpha$ and false negative probability $\beta$.  While any pair $(\alpha, \beta)$ and $\E[N]$ do parameterize the procedure, our goal is to develop universal bounds in terms of the underlying problem parameters, $P_e$ and $\E[N]$.   

Assumptions 1 and 2 allow for the following recursive relationships, which will be central to our performance analysis.  Let $N_i$ be the (random) number of samples taken of population $i$, and $N = \sum_{i=1}^\infty N_i$ be the total number of samples taken by the procedure.  We can write the expected number of samples as
\begin{eqnarray} \label{eqn:en1232}
\mathbb{E}[N] &=&   \E[N_1]  \quad + \\ \nonumber 
 & &  \hspace{-.6cm}\E\left[N_2+N_3 + ... \: \left \vert \hat{X}_1 = 0 \right.\right] \left( (1-\pi) (1-\alpha) +\pi \beta \right)
\end{eqnarray}
where $ (1-\pi) (1-\alpha) +\pi \beta $ is the probability the procedure arrives at the second index.
The expected number of samples used from the second index onwards, given that the procedure arrives at the second index (without declaring $I = 1$), is simply equal to the total number of samples:  $\E[N_2+N_3 + ... \: | \hat{X}_1 = 0 ] = \E[N]$.  Rearranging terms in (\ref{eqn:en1232}) gives the following relationship
\begin{eqnarray} \label{eqn:expNumMeas}
\E[N] &=& \frac{ \E[N_1]   }{ \alpha(1-\pi) + \pi (1-\beta)  }.
\end{eqnarray}
In the same manner we arrive at the following expression for the probability of error:
\begin{eqnarray} 
P_e = \frac{ \alpha(1-\pi)}{ \alpha(1-\pi) + \pi (1-\beta) } = \frac{1}{1+\frac{\pi(1-\beta)}{\alpha(1-\pi)}} . \label{eqn:Pe} 
\end{eqnarray}
From this expression we see that if  
\begin{eqnarray} \nonumber
\frac{\alpha(1-\pi)}{\pi(1-\beta)} \geq \delta
\end{eqnarray}
for some $\delta > 0$, then $P_e \geq \frac{\delta}{1+\delta}$,
and $P_e$ is greater than or equal to some positive constant.

  
Lastly, the bounds derived throughout often depend on explicit constants, in particular the Kullback-Leibler divergence between the distributions, defined in the usual manner:
\begin{eqnarray} \nonumber
D(P_1||P_0) &:=& \E_1 \left[ L(Y) \right]  \\
D(P_0||P_1) &:=& \E_0 \left[ - L(Y) \right]  \nonumber
\end{eqnarray}   
where 
\begin{eqnarray} \nonumber
L(Y) := \log \frac{P_1(Y)}{P_0(Y)}
\end{eqnarray}
is the log-likelihood ratio. 
Other constants are denoted by $C_1$, $C_1'$, etc., and represent distinct numerical constants which may depend on $P_0$ and $P_1$.

\section{Lower bound for any procedure}
We begin with a lower bound on the number of samples required by any procedure to find a population following distribution $P_1$.   The main theorem of the section is presented first, followed by two corollaries aimed at highlighting the relationship between the problem parameters. 


\begin{thm} \label{thm:SeqLB}
Any procedure with
\begin{eqnarray} \nonumber
 P_e \leq \frac{\delta}{1+\delta}
\end{eqnarray}
also has 
\begin{eqnarray}  \label{eqn:thmMain}
\E[N] &\geq &   \frac{1-\pi}{ \: \pi  } \frac{(1-\delta)^2}{(1+\delta)}   \max \left(1, \frac{1}{D(P_0||P_1)} \right)   + \\ \nonumber
&& \frac{\log\left( \frac{1}{2\pi \delta} \right)}{ D(P_1||P_0)} \left( \frac{1-\delta\frac{D(P_1||P_0)}{D(P_0||P_1)}  }{1+\delta} \right)    -  \frac{1}{D(P_1||P_0)}
\end{eqnarray}
for any $\delta \in [0,1/2]$.
\end{thm}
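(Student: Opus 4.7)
The plan is to combine the identities (\ref{eqn:expNumMeas}) and (\ref{eqn:Pe}) with classical Wald-type lower bounds on sequential tests. The first step is to cash in the error hypothesis. Setting $P_e \leq \delta/(1+\delta)$ in (\ref{eqn:Pe}) rearranges to $\alpha(1-\pi) \leq \delta\pi(1-\beta)$, which gives both the denominator slack $\alpha(1-\pi)+\pi(1-\beta) \leq (1+\delta)\pi(1-\beta)$ and the bound $(1-\beta)/\alpha \geq (1-\pi)/(\pi\delta) \geq 1/(2\pi\delta)$ (using $\pi\leq 1/2$). Splitting the numerator as $\E[N_1] = (1-\pi)\E_0[N_1] + \pi\E_1[N_1]$ in (\ref{eqn:expNumMeas}) and inserting the denominator slack yields the natural two-term decomposition
\[\E[N] \;\geq\; \frac{(1-\pi)\,\E_0[N_1]}{(1+\delta)\,\pi(1-\beta)} \;+\; \frac{\E_1[N_1]}{(1+\delta)(1-\beta)},\]
whose two summands will ultimately produce the $(1-\pi)/\pi$ term $T_1$ and the $\log(1/(2\pi\delta))$ term $T_2$ of (\ref{eqn:thmMain}) respectively.

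The second step applies a Wald/data-processing lower bound to each single-population sequential test: any test with errors $(\alpha,\beta)$ satisfies $\E_0[N_1]\cdot D(P_0\|P_1) \geq d(\alpha,1-\beta)$ and $\E_1[N_1]\cdot D(P_1\|P_0) \geq d(1-\beta,\alpha)$, where $d$ is the binary KL divergence. For the second summand I would expand $d(1-\beta,\alpha) = (1-\beta)\log\frac{1-\beta}{\alpha}+\beta\log\frac{\beta}{1-\alpha}$, use the bound on $(1-\beta)/\alpha$ just derived to extract a factor $\log(1/(2\pi\delta))/D(P_1\|P_0)$, and absorb the residual $\beta\log(\beta/(1-\alpha))$ term (controlled by $x\log x \geq -1/e$) into the final additive $-1/D(P_1\|P_0)$ correction. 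For the first summand I would similarly isolate the dominant $(1-\alpha)\log(1/\beta)$ piece of $d(\alpha,1-\beta)$ and invoke the elementary inequality $\log(1/\beta)\geq 1-\beta$ to cancel the $(1-\beta)$ in the denominator, producing the $(1-\pi)/(\pi\,D(P_0\|P_1))$ factor of $T_1$; the $\max(1,1/D(P_0\|P_1))$ is then obtained by running this argument in parallel with the trivial bound $\E_0[N_1] \geq 1$, which holds as soon as the procedure actually samples population~1.

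The $(1-\delta)$-type prefactors $(1-\delta)^2/(1+\delta)$ in $T_1$ and $\bigl(1-\delta\,D(P_1\|P_0)/D(P_0\|P_1)\bigr)/(1+\delta)$ in $T_2$ then arise from careful bookkeeping of the crossterms $\alpha\log(\alpha/(1-\beta))$ in $d(\alpha,1-\beta)$ and $\beta\log(\beta/(1-\alpha))$ in $d(1-\beta,\alpha)$, each bounded using the error constraint $\alpha/(1-\beta) \leq \delta\pi/(1-\pi)$. The main obstacle will be precisely this bookkeeping: every ``correction'' term in the two Wald expansions must be assigned either to $T_1$, to $T_2$, or to the additive $-1/D(P_1\|P_0)$ slack, and the exchange between $D(P_0\|P_1)$ and $D(P_1\|P_0)$ appearing inside $T_2$ corresponds exactly to moving the leftover $(1-\pi)\alpha\log(\alpha/(1-\beta))/D(P_0\|P_1)$ piece of the first summand into the second. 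Once this accounting is carried out and terms are collected, one obtains (\ref{eqn:thmMain}).
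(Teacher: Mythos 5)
Your proposal follows essentially the same route as the paper's Appendix~A: the same use of (\ref{eqn:Pe}) to get $\alpha(1-\pi)\le\delta\pi(1-\beta)$, the same two-summand decomposition of (\ref{eqn:expNumMeas}) into the $E_0$ and $E_1$ contributions, the same Wald-type lower bounds (\ref{eqn:E1bnd})--(\ref{eqn:E0bnd}) in terms of the binary divergence, and the same assignment of cross terms --- in particular, your observation that the leftover $\alpha\log(\alpha/(1-\beta))/D(P_0||P_1)$ piece migrates into the $\log(1/(2\pi\delta))$ term is exactly how the factor $1-\delta\,D(P_1||P_0)/D(P_0||P_1)$ arises in the paper. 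Two repairs are needed. First, the inequality you cite to absorb the residual $\beta\log(\beta/(1-\alpha))$ term, namely $x\log x\ge -1/e$, does not suffice as stated: that residual carries a $1/(1-\beta)$ factor, so $\beta\log\beta\ge-1/e$ only yields $-1/(e(1-\beta))$, which is unbounded as $\beta\to1$. The inequality actually needed (and used in the paper) is $\beta\log\beta\ge-(1-\beta)$, equally elementary. Second, the ``bookkeeping'' you defer is where the substantive work lies: to obtain the $(1-\delta)^2$ prefactor one cannot bound the crossterm $\alpha\log(\alpha/(1-\beta))$ in isolation, because the main term $(1-\alpha)\log((1-\alpha)/\beta)$ also depends on $\alpha$. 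The paper handles this by showing the entire $E_0$ bound is non-increasing in $\alpha$ over the admissible range, substituting the extremal value $\alpha=\delta\pi(1-\beta)/(1-\pi)$, and then invoking $\log\left(1/\beta-\delta(1-\beta)/\beta\right)\ge(1-\delta)(1-\beta)$ in place of your $\log(1/\beta)\ge 1-\beta$. Your sketch identifies the correct constraint and the correct target form, so with these two fixes the argument closes along the paper's lines.
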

\begin{proof}
See Appendix A.
\end{proof}   
Theorem \ref{thm:SeqLB} lower bounds the expected number of samples required by any procedure to achieve a desired performance, and is comprised of two terms with dependence on $\pi$ and the error probability, and a constant offset.   To help emphasize this dependence on the problem parameters, we present the following Corollary.
\begin{cor} \label{corr:LB}
Any procedure with
\begin{eqnarray} \nonumber
 P_e \leq \frac{\delta}{1+\delta}
\end{eqnarray}
also has
\begin{eqnarray} \label{eqn:corrLB} 
\E[N]  &\geq&  \hspace{-.2cm} \frac{1}{D(P_0||P_1) }\left(\frac{1}{12 \: \pi } + \frac{1}{3} \; \log \left( \frac{1}{2 \pi \delta}\right)   -  1\right)
\end{eqnarray}
for any $\delta \leq 1/2$.  Here, we assume $D(P_0||P_1) = D(P_1||P_0)$ for simplicity of presentation. 
\end{cor}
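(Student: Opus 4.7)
The plan is to derive Corollary~1 directly from Theorem~\ref{thm:SeqLB} by substitution and term-by-term bounding; no new probabilistic content is needed. Let $D := D(P_0\|P_1) = D(P_1\|P_0)$ and restrict attention to $\delta \in [0,1/2]$ and $\pi \leq 1/2$ (the latter is already assumed in the problem setup).

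First I would tackle the leading $\pi^{-1}$ term in \eqref{eqn:thmMain}. Since $D \leq \max(1,1/D) \cdot D$, we always have $\max(1, 1/D) \geq 1/D$, so this term is at least $\frac{1-\pi}{\pi}\frac{(1-\delta)^2}{1+\delta}\frac{1}{D}$. Using $\pi \leq 1/2$ gives $(1-\pi)/\pi \geq 1/(2\pi)$, and monotonicity in $\delta$ over $[0,1/2]$ yields $(1-\delta)^2/(1+\delta) \geq (1/2)^2/(3/2) = 1/6$. Multiplying these yields the $\frac{1}{12\pi D}$ contribution.

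Next I would simplify the logarithmic term. With $D(P_1\|P_0)/D(P_0\|P_1) = 1$, the parenthetical factor collapses to $(1-\delta)/(1+\delta)$, which is decreasing in $\delta$ and hence at least $(1-1/2)/(1+1/2) = 1/3$ for $\delta \in [0,1/2]$. This gives a lower bound of $\frac{1}{3D}\log\frac{1}{2\pi\delta}$. The constant offset $-1/D(P_1\|P_0) = -1/D$ carries through unchanged. Summing the three contributions and factoring out $1/D$ reproduces exactly the right-hand side of \eqref{eqn:corrLB}.

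The ``hard part'' here is essentially nonexistent: all of the probabilistic work lives in Theorem~\ref{thm:SeqLB}, and the corollary is a one-step algebraic simplification whose only subtlety is verifying that each of the three numerical constants ($1/12$, $1/3$, $1$) arises from the tightest monotone bound on $\delta$ and $\pi$ over the allowed ranges. I would therefore present the proof as a short chain of inequalities, one line per term, followed by a single line that combines them. The assumption $D(P_0\|P_1) = D(P_1\|P_0)$ is used only to make the sign of the $\delta$-dependent ratio in the second term manifestly nonnegative (so that lower-bounding it by substituting $\delta = 1/2$ is valid for the full range); without that assumption, one would have to keep track of the ratio $D(P_1\|P_0)/D(P_0\|P_1)$ explicitly, which is what Theorem~\ref{thm:SeqLB} already does.
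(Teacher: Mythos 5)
Your proposal is correct and matches the paper's approach exactly: the paper states that Corollary~\ref{corr:LB} ``follows immediately from Theorem~\ref{thm:SeqLB}, as $\pi \leq 1/2$ and $\delta \leq 1/2$,'' and your term-by-term bounding (yielding $1/(12\pi)$, $1/3$, and the $-1$ offset) is precisely the omitted calculation. No gaps.
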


\noindent Proof of Corollary \ref{corr:LB} follows immediately from Theorem \ref{thm:SeqLB}, as $\pi \leq 1/2$ and $\delta \leq 1/2$.  

Corollary \ref{corr:LB} provides a particularly intuitive way to quantify the number of samples required for the quickest search problem. The first term in (\ref{eqn:corrLB}), which has a $1/\pi$ dependence, can be interpreted as the minimum number of samples required to \emph{find} a population following distribution $P_1$.  The second term, which has a $\log \delta^{-1}$ dependence, is best interpreted as the minimum number of samples required to \emph{confirm} that a population following $P_1$ has been found. 

When the populations following distribution $P_1$ become rare (when $\pi$ tends to zero),  the second and third terms in (\ref{eqn:corrLB})  become small compared to the first term.  This suggests the vast majority of samples are used to \emph{find} a rare population, and a vanishing proportion are needed for \emph{confirmation}.  The corollary below captures this effect.  The leading constants are of particular importance, as we relate them to upper bounds in Sec. \ref{sec:SSPRT}. 
In the following, consider $P_e$ and $\E[N]$ as functions $\pi$, $P_0$, $P_1$, and some sampling procedure $\mathcal{A}$.

\begin{cor}  {\bf{Rare population.}}   \label{corr:LB2}
Fix $\delta \in (0,1/2]$. Then any procedure $\mathcal{A}$ that satisfies
\begin{eqnarray} \nonumber
\limsup_{\pi \rightarrow 0}  P_e  \leq \frac{\delta}{1+\delta}
\end{eqnarray}  \label{lem:LBrare}
also has 
\begin{eqnarray} \nonumber
\liminf_{\pi \rightarrow 0}  \; \pi \: \E[N] & \geq &  \frac{(1-\delta)^2}{(1+\delta)}  \max \left(\frac{1}{D(P_0||P_1)}, 1   \right).
\end{eqnarray}
\end{cor}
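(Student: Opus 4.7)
The plan is to deduce the corollary directly from Theorem \ref{thm:SeqLB} by multiplying through by $\pi$, taking the liminf as $\pi\to 0$, and observing that all correction terms vanish at this rate. The only genuine subtlety is that the hypothesis is a limsup rather than a uniform bound on $P_e$, so I will first convert the limsup hypothesis into an eventual pointwise bound at the cost of an arbitrarily small slack $\epsilon$.

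Concretely, fix $\delta\in(0,1/2]$ and suppose $\mathcal{A}$ satisfies $\limsup_{\pi\to 0} P_e(\pi)\leq \delta/(1+\delta)$. For any $\epsilon>0$, there exists $\pi_0(\epsilon)>0$ such that $P_e(\pi)\leq \delta'/(1+\delta')$ for all $\pi<\pi_0$, where $\delta'=\delta'(\epsilon)\searrow \delta$ as $\epsilon\to 0$ (and $\delta'\leq 1/2$ once $\epsilon$ is small). Then Theorem \ref{thm:SeqLB} applies with parameter $\delta'$ for every $\pi<\pi_0$, giving
\begin{eqnarray*}
\pi\,\E[N] &\geq& (1-\pi)\frac{(1-\delta')^2}{1+\delta'}\max\!\left(1,\tfrac{1}{D(P_0||P_1)}\right) \\
&& +\; \pi\,\frac{\log(1/(2\pi\delta'))}{D(P_1||P_0)}\cdot\frac{1-\delta'\,D(P_1||P_0)/D(P_0||P_1)}{1+\delta'} \\
&& -\;\frac{\pi}{D(P_1||P_0)}.
\end{eqnarray*}

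Now I take the liminf as $\pi\to 0$ on the right-hand side. The first term converges to $\tfrac{(1-\delta')^2}{1+\delta'}\max(1,1/D(P_0||P_1))$ since $1-\pi\to 1$. The second term is $O(\pi\log(1/\pi))$, which vanishes regardless of the sign of its bracket (this is the reason the correction terms in Theorem \ref{thm:SeqLB} do not matter in the rare regime). The third term obviously vanishes. Hence
\[
\liminf_{\pi\to 0}\,\pi\,\E[N] \;\geq\; \frac{(1-\delta')^2}{1+\delta'}\,\max\!\left(\tfrac{1}{D(P_0||P_1)},1\right).
\]

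Finally, letting $\epsilon\to 0$ so that $\delta'\to\delta$, the right-hand side is continuous in $\delta'$ and passes to $\frac{(1-\delta)^2}{1+\delta}\max(1/D(P_0||P_1),1)$, yielding the claim. I do not anticipate a genuine obstacle: the entire argument is a limit-passage bookkeeping exercise on top of Theorem \ref{thm:SeqLB}. The one point that requires a little care is ensuring that the liminf can absorb the limsup hypothesis via the auxiliary parameter $\delta'$, which is handled by the standard $\epsilon$-trick above together with continuity of the bound in $\delta'$.
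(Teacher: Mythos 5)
Your proposal is correct and takes essentially the same route as the paper, whose entire justification is the observation that after multiplying (\ref{eqn:thmMain}) by $\pi$ the second and third terms are $O(\pi\log(1/\pi))$ and $O(\pi)$ and hence vanish, leaving the first term in the limit. Your $\epsilon$/$\delta'$ bookkeeping to absorb the limsup hypothesis is a more careful rendering of a step the paper glosses over entirely; the only quibble is that at the endpoint $\delta=1/2$ your auxiliary $\delta'$ exceeds $1/2$ and so formally leaves the stated range of Theorem \ref{thm:SeqLB}, a harmless edge case worth a remark but not a gap in substance.
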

\noindent The proof of Corollary \ref{lem:LBrare} follows from Theorem \ref{thm:SeqLB} by noting both the second and third terms of (\ref{eqn:thmMain}) are overwhelmed as $\pi$ becomes small.   

Corollary \ref{lem:LBrare} is best interpreted in two regimes: \emph{(1)} the high SNR regime, when $D(P_0||P_1) > 1$, and \emph{(2)}, the low SNR regime, when $D(P_0||P_1) \leq 1$. 
In the high SNR regime, when $D(P_0||P_1) > 1$, any procedure with $\lim_{\pi \rightarrow 0} P_e = 0$
also has $\lim_{\pi \rightarrow 0} \: \pi \: \mathbb{E}[N] \geq 1$.
This simply implies that any procedure requiring fewer samples in expectation than $\pi^{-1}$ also has probability of error bound away from zero.   The bound becomes tight when the SNR becomes high -- when $D(P_0||P_1)$ is sufficiently large, we expect to classify each population with one sample.    
In the lower SNR regime, where $D(P_0||P_1) \leq 1$, any procedure with $\lim_{\pi \rightarrow 0} P_e = 0$ also has 
$\lim_{\pi \rightarrow 0} \: \pi \: \mathbb{E}[N] \geq  1/D(P_0||P_1)$.
In the low SNR regime the sampling requirements are at best an additional factor of $D(P_0||P_1)^{-1}$ higher than when we can classify each distribution with one sample.
   

\section{S-SPRT Procedure} \label{sec:SSPRT}
The Sequential Probability Ratio Test (SPRT), optimal for simple binary hypothesis tests in terms of minimizing the expected number of samples for tests of a given power and significance \cite{Wald1948}, can be applied to the problem studied here by implementing a series of SPRTs on the individual populations.  For notational convenience, we refer to this procedure as the S-SPRT.  This is equivalent in form to the CUSUM test studied in \cite{5961845},  which is traditionally applied to change point detection problems. 

The S-SPRT operates as follows.  Imagine the procedure has currently taken $j$ samples of population $i$. The procedure continues to sample population $i$ provided 
\begin{eqnarray} \label{eqn:SPRTrule}
  \gamma_\mathrm{L} \ < \ \Lambda_{i,j} \ < \ \gamma_\mathrm{U}.
\end{eqnarray}
where $\Lambda_{i,j} :=  \prod_{k=1}^{j} \frac{P_1(Y_{i,k})}{P_0(Y_{i,k})} $ is the likelihood ratio statistic, and $\gu$ and $\gl$ are scalar upper and lower thresholds.
In words, the procedure continues to sample population $i$ provided the likelihood ratio comprised of samples of that population is between two scalar thresholds.   The S-SPRT stops sampling population $i$ after $N_i$ samples, which is a random integer representing the smallest number of samples such that (\ref{eqn:SPRTrule}) no longer holds:
\begin{eqnarray} \nonumber
N_i : = \min\left\{j: \, \Lambda_{i,j} \leq \gl \; \bigcup \; \Lambda_{i,j}\geq \gu \right\}.
\end{eqnarray}
When the likelihood ratio exceeds (or equals) $\gu$, then $\hat{X}_i = 1$, and the S-SPRT terminates returning $I=i$.  Conversely, if the likelihood ratio falls below (or equals) $\gl$, then $\hat{X}_i = 0$, and the procedure moves to index $i+1$.  The procedure is detailed in Algorithm \ref{alg:ssprt}.

\begin{algorithm}[h]
\caption{\hspace{1cm}Series of SPRTs Procedure (S-SPRT)}
\begin{algorithmic} \label{alg:ssprt}
\STATE{input: thresholds $\gl$, $\gu$, distributions $P_0$, $P_1$}
\STATE{initialize: $i = 1$, $j=1$, $\Lambda = 1$ }
\WHILE{$ \Lambda \ < \ \gamma_\mathrm{U}$}
\STATE{{\bf measure}: $Y_{i,j}$}
\STATE{{\bf{compute}}: $\Lambda = \Lambda \cdot \frac{P_1(Y_{i,j} )}{ P_0(Y_{i,j} )} $}
\IF{$\Lambda \leq \gamma_{L}  $ }
\STATE{$i = i+1$, $j = 1$, $\Lambda =1$}
\ELSE
\STATE{$j = j+1$}
\ENDIF
\ENDWHILE
\STATE{output: $I=i$}
\end{algorithmic}
\end{algorithm}

The S-SPRT procedure studied in \cite{5961845} fixes the lower threshold in each individual SPRT at $\gl = 1$ (and hence terms the procedure a CUSUM test). This has a very intuitive interpretation; since there are an infinite number of populations, anytime a sample suggests that a particular population does not follow $P_1$, moving to another population is best.   While this approach is optimal \cite{5961845}, we use a strictly smaller threshold, as it results in a simpler derivation of the upper bound.  

In the following theorem and corollary we assume a minor restriction on the tail distribution of the log-likelihood ratio test statistic, a notion studied in depth in \cite{SPRTbounds1960}.  Specifically, recall $L = \log (P_1(Y) /P_0(Y))$ is the log-likelihood statistic.  We require that
\begin{eqnarray} \label{eqn:ass1}
\max_{r\geq 0} \; \E\left[L - r |L \geq r\right] < \infty
\end{eqnarray}
and 
\begin{eqnarray} \label{eqn:ass2}
\min_{r\geq 0} \; \E\left[L + r |L \leq -r \right] > -\infty.
\end{eqnarray}
This condition is satisfied when $L$ follows any bounded distribution, Gaussian distributions, exponential distributions, among others.  It is not satisfied by distributions with infinite variance or polynomial tails.  A more thorough discussion of this restriction is studied in \cite{SPRTbounds1960}.

\begin{thm} \label{thm:SPRT}
The S-SPRT with $\gl \in (0,1)$ and 
$\gu = \frac{1-\pi}{\pi \delta}$, $\delta \in [0,1/2]$ satisfies  
\begin{eqnarray} \nonumber
P_e \leq \frac{\delta}{1+\delta}
\end{eqnarray}
and 
\begin{eqnarray} \label{eqn:SSPRTbnd}
 \E[N]  \leq \frac{ C_1  }{\pi }  +  \frac{ \log \frac{1}{\pi \delta} }{D(P_1||P_0)}  + C_2
\end{eqnarray} 
for some constants $C_1$ and $C_2$ independent of $\pi$ and $\delta$.
\end{thm}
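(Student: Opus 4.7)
The plan is to reduce the analysis of the whole S-SPRT to the analysis of a single SPRT on one population, then plug the resulting per-population quantities ($\alpha$, $\beta$, $\E_0[N_1]$, $\E_1[N_1]$) into the recursions (\ref{eqn:expNumMeas}) and (\ref{eqn:Pe}) already established in the problem setup. Assumption 2 justifies this reduction: each population is handled by an identical SPRT, so $\alpha$, $\beta$ and the conditional sample counts do not vary with the index.

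For the error bound I would invoke Wald's fundamental identity of sequential analysis (a likelihood-ratio martingale / change-of-measure argument), which gives $\alpha \leq (1-\beta)/\gu$ and $\beta \leq \gl(1-\alpha)$ for an SPRT with upper and lower thresholds $\gu,\gl$. Choosing $\gu = (1-\pi)/(\pi\delta)$ immediately yields $\alpha(1-\pi) \leq \delta\,\pi(1-\beta)$, and substituting into (\ref{eqn:Pe}) gives $P_e \leq \delta/(1+\delta)$. As a by-product, $\beta \leq \gl$, so $1-\beta \geq 1-\gl > 0$ is a strictly positive constant since $\gl \in (0,1)$ is fixed.

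For the sample-complexity bound I would apply Wald's equation $\E_k[N_1]\,\E_k[L] = \E_k[S_{N_1}]$ for $k=0,1$, where $L$ is the single-sample log-likelihood ratio and $S_n$ its running sum. Decomposing $\E_k[S_{N_1}]$ according to which boundary is crossed and absorbing the overshoot terms, which are $O(1)$ under assumptions (\ref{eqn:ass1})--(\ref{eqn:ass2}), gives
\begin{align*}
\E_0[N_1] &\leq \frac{|\log\gl| + \alpha\log\gu + C'}{D(P_0||P_1)}, \\
\E_1[N_1] &\leq \frac{(1-\beta)\log\gu + C''}{D(P_1||P_0)},
\end{align*}
with $C',C''$ depending only on $P_0,P_1,\gl$. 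Since $\alpha\log\gu \leq (\pi\delta/(1-\pi))\log((1-\pi)/(\pi\delta)) = O(1)$ and $\gl$ is fixed, $\E_0[N_1]$ is a constant independent of $\pi$ and $\delta$. Plugging into (\ref{eqn:expNumMeas}) and splitting the numerator:
\begin{align*}
\E[N] \leq \frac{(1-\pi)\E_0[N_1]}{\pi(1-\beta)} + \frac{\E_1[N_1]}{1-\beta}.
\end{align*}
The first term is at most $\E_0[N_1]/((1-\gl)\pi) = C_1/\pi$. The key step is the second term: the $(1-\beta)$ factor in the leading part of $\E_1[N_1]$ cancels exactly against the $1-\beta$ in the denominator, leaving $\log\gu/D(P_1||P_0) + O(1) = \log(1/(\pi\delta))/D(P_1||P_0) + C_2$. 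This cancellation is what produces a leading constant of exactly one in front of the log term rather than a $1/(1-\gl)$ factor.

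The main obstacle is the careful bookkeeping of overshoot. The naive Wald approximations treat $S_{N_1}$ as equal to $\log\gu$ or $\log\gl$ at the stopping time, but in general the walk overshoots by a random amount. Without the tail conditions (\ref{eqn:ass1})--(\ref{eqn:ass2}) that expected overshoot can be infinite, which would destroy the claim that $C',C''$ are genuinely $O(1)$ and hence absorbable into $C_1,C_2$. Under those conditions, standard renewal-theoretic bounds (as in \cite{SPRTbounds1960}) give a uniform-in-threshold bound on expected overshoot, which is exactly what is needed to close the argument.
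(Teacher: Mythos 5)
Your proposal is correct and follows essentially the same route as the paper's Appendix B: the Wald inequalities $\alpha \leq (1-\beta)/\gu$ and $\beta \leq \gl(1-\alpha)$ for the error bound, Wald's identity with overshoot constants controlled by (\ref{eqn:ass1})--(\ref{eqn:ass2}) for $E_0$ and $E_1$, and substitution into (\ref{eqn:expNumMeas}) with the same $(1-\beta)$ cancellation in the $E_1$ term and the same $1/(1-\gl)$ factor in the $E_0$ term. The only cosmetic difference is that your bound on $\E_0[N_1]$ carries an extra $\alpha\log\gu$ term that the paper drops as negative, but as you note it is $O(1)$ uniformly in $\pi$ and $\delta$, so the conclusion is unaffected.
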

\begin{proof}
The full proof is given in Appendix B.   
\end{proof}
The main argument of the proof of Theorem \ref{thm:SPRT}  follows from standard techniques in sequential analysis.  The constant $C_1$ is a function of the underlying distributions and is given by
\begin{eqnarray} \label{eqn:CC2}
C_1 = \frac{C_1' + \log \gl^{-1}}{( 1-\gl)D(P_0||P_1) } 
\end{eqnarray}
where $C_1'$ is a bound on the \emph{overshoot} in the log-likelihood ratio when it falls outside $\gamma_U$ or $\gamma_L$.  $C_1'$, and thus $C_1$, can be explicitly calculated depending on the underlying distributions in a number of cases (see Examples 1 and 2, and \cite[page 145]{1945Wald}, and \cite{SPRTbounds1960}).

\begin{cor} {\bf{Rare population.}} \label{lem:SPRTS}
Fix $\delta \in (0,1/2]$. The S-SPRT with any $\gl \in (0,1)$ and  
$\gu = \frac{1-\pi}{\pi \delta}$  satisfies 
$ P_e \leq \frac{\delta}{1+\delta} $
and 
\begin{eqnarray} \nonumber
 \lim_{\pi \rightarrow 0} \; \pi \:\E[N]  \leq  C_1 
\end{eqnarray}
for some constant $C_1$ independent of $\pi$ and $\delta$.
\end{cor}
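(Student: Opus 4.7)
The plan is to derive this corollary essentially as a direct limiting statement of Theorem \ref{thm:SPRT}. The $P_e$ bound is identical in both statements, so nothing needs to be redone there: choosing $\gu = (1-\pi)/(\pi\delta)$ and any $\gl \in (0,1)$ falls inside the hypotheses of Theorem \ref{thm:SPRT}, and hence $P_e \leq \delta/(1+\delta)$ is inherited. The only thing left is to analyze the behavior of $\pi \, \E[N]$ as $\pi \to 0$.

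For the sample-complexity part, my plan is to start from the bound
\begin{equation*}
\E[N] \;\leq\; \frac{C_1}{\pi} \;+\; \frac{\log \tfrac{1}{\pi\delta}}{D(P_1\|P_0)} \;+\; C_2
\end{equation*}
supplied by Theorem \ref{thm:SPRT}, where $C_1$ (given explicitly in (\ref{eqn:CC2})) and $C_2$ are independent of $\pi$ and $\delta$. Multiplying both sides by $\pi$ gives
\begin{equation*}
\pi \, \E[N] \;\leq\; C_1 \;+\; \frac{\pi \log \tfrac{1}{\pi\delta}}{D(P_1\|P_0)} \;+\; \pi \, C_2.
\end{equation*}
The second term tends to $0$ as $\pi \to 0$ because $\pi \log \pi^{-1} \to 0$ (with $\delta$ held fixed), and the third term tends to $0$ trivially. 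Taking $\limsup$ on both sides then yields $\limsup_{\pi \to 0} \pi \, \E[N] \leq C_1$. Since the sequence $\pi \, \E[N]$ is nonnegative, this upper bound on the $\limsup$ is the desired limit statement (using $\limsup$ in place of $\lim$ if the limit does not exist a priori, which is the standard reading of the bound).

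There is no real obstacle here; the corollary is simply the $\pi \to 0$ asymptotic of Theorem \ref{thm:SPRT} with $\delta$ frozen. The only subtlety worth flagging is that the constant $C_1$ appearing in the corollary is precisely the one defined in (\ref{eqn:CC2}), so its value depends on the chosen lower threshold $\gl$ and on the overshoot constant $C_1'$ for the log-likelihood statistic, both of which are $\pi$- and $\delta$-independent, as required. Thus the proof amounts to quoting Theorem \ref{thm:SPRT} and observing that the two additive correction terms vanish relative to $C_1$ in the rare-population limit.
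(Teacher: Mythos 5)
Your proposal is correct and matches the paper's own treatment: the authors state that the corollary is an immediate consequence of Theorem \ref{thm:SPRT}, obtained precisely by multiplying the bound (\ref{eqn:SSPRTbnd}) by $\pi$ and noting that the $\pi \log\frac{1}{\pi\delta}$ and $\pi C_2$ terms vanish as $\pi \rightarrow 0$ with $\delta$ fixed. Your remark that $C_1$ is the constant from (\ref{eqn:CC2}), depending on $\gl$ and the overshoot but not on $\pi$ or $\delta$, is also exactly the paper's reading.
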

The proof of Corollary \ref{lem:SPRTS} is an immediate consequence of Theorem \ref{thm:SPRT}.  Note that $\gu>1$, since we assume $\pi \leq 1/2$.  As the atypical populations become rare, sampling is dominated by \emph{finding} an atypical population, which is order $\pi^{-1}$.   The constant factor of $C_1$ is the multiplicative increase in the number of samples required when the problem becomes noisy.

\vspace{.3cm}
\noindent{\bf{Remark 1.}} The S-SPRT procedure is fairly insensitive to our knowledge of the true prior probability $\pi$. On one hand, if we overestimate $\pi$ by using a  larger $\tilde{\pi}$ to specify the upper threshold $\gu = \frac{1-\tilde{\pi}}{\tilde{\pi}\delta}$, then according to \eqref{eqn:pe} the probability of error $P_e$ increases and is approximately $\frac{\tilde{\pi}}{\pi}\frac{\delta}{1+\delta}$, while the order of $\mathbb{E}[N]$ remains the same. On the other hand, if our $\tilde{\pi}$ underestimates $\pi$, then the probability of error $P_e$ is reduced by a factor of $\tilde{\pi}/\pi$, and the order of $\mathbb{E}[N]$ also remains the same, provided  $\log(1/\tilde{\pi}) \leq 1/\pi$, i.e., $\tilde{\pi}$ is not exponentially smaller than $\pi$. As a consequence, it is sensible to underestimate $\pi$, rather than overestimate $\pi$ as the latter would increase the probability of error.

\vspace{.3cm}
\noindent \textbf{Remark 2.}  Implementing a sequential probability ratio test on each population can be challenging for many practical problems.  While the S-SPRT is optimal when both $P_0$ and $P_1$ are known and testing a single population amounts to a simple binary hypothesis test, scenarios often arise where some parameter of distribution $P_1$ is unknown.  Since the SPRT is based on exact knowledge of $P_1$, it cannot be implemented in this case.  A simple example where $P_0 \sim \mathcal{N}(0,1)$ and $P_1 \sim \mathcal{N}(\mu,1)$, for some unknown $\mu > 0$, illustrates this issue.    
 Many alternatives to the SPRT have been proposed for \emph{composite} hypothesis (see \cite{lai1988nearly, fellouris2012almost}, etc.).  In the next section we propose an alternative that is near optimal and also simple to implement.

\section{Sequential Thresholding}

Sequential thresholding, first proposed for sparse recovery problems in \cite{MalloyISIT}, can be applied to the search for an atypical population, and admits a number of appealing properties.  It is particularly well suited for problems in which the atypical distributions are rare.  Sequential thresholding does not require full knowledge of the distributions, specifically $P_1$, as required by the S-SPRT (see Remarks 2 and 4).  Moreover, the procedure admits a general error analysis, and perhaps most importantly is very simple to implement (a similar procedure is used in the SETI project \cite{SETI, NYtimes}).  The procedure can substantially outperform non-adaptive procedures as $\pi$ becomes small. Roughly speaking, for small values of $\pi$, the procedure reliably recovers an atypical population with
\begin{eqnarray} \nonumber
\E[N] \lesssim \frac{C }{\pi}
\end{eqnarray}
for some constant $C$ independent of $\pi$.

\begin{algorithm}[h]
\caption{\hspace{1cm} Sequential Thresholding}
\begin{algorithmic} \label{alg:sds}
\STATE{input: integer $k_{\max}$}
\STATE{initialize: $i = 1$, $k = 1$}
\WHILE{$k \leq k_{\max}$}
\STATE{{\bf measure}: $(Y_{i,j+1},\dots,Y_{i,j+k})$ where $j = \sum_{m=1}^k (m-1)$ }
\IF{$T \left( Y_{i,j+1},\dots,Y_{i,j+k}\right) \leq \gamma_k$ }
\STATE{$i = i+1$}
\STATE{ $k = 1$}
\ELSE
\STATE{$k = k+1$}
\ENDIF
\ENDWHILE
\STATE{output: $\hat{X}_i = 1$}
\end{algorithmic}
\end{algorithm}

Sequential thresholding requires one input: $k_{\max}$, an integer representing the maximum number of \emph{rounds} for any particular index.  
Let $T$ represent a sufficient statistic for the likelihood ratio that does not depend on the parameters of $P_1$ or $P_0$ (for example, when $P_0$ and $P_1$ are Gaussian with different mean, $T(Y_{i,j},...,Y_{i,j'}) = Y_{i,j}+\dots+Y_{i,j'}$).

The procedure searches for an atypical population as follows.  Starting on population $i$, the procedure takes one sample.  If the sufficient statistic comprised of that sample is greater than the threshold, i.e. $T(Y_{i,1}) > \gamma_1$, the procedures takes two additional samples of index $i$ and forms $T(Y_{i,2},Y_{i,3})$ (which is only a function of the second and third samples). If $T(Y_{i,2},Y_{i,3}) > \gamma_2$, three more samples are taken, and $T(Y_{i,4},Y_{i,5},Y_{i,6})$ is tested against a threshold.  The procedure continues in this manner, taking $k$ samples on \emph{round} $k$, and testing the statistic \emph{up to} a maximum of $k_{\max}$ times.  If the statistic is below the threshold, i.e. $T_k < \gamma_k$, on any round, the procedure immediately moves to the next population, setting $i = i+1$, and resetting $k$.  Should any population survive all $k_{\max}$ rounds, the procedure estimates $\hat{X}_i = 1$, and terminates.  The procedure is detailed in Algorithm \ref{alg:sds}.

Control of the probability of error depends on the series of thresholds $\gamma_k$ and the number of rounds $k_{\max}$. For our analysis the thresholds are set as to satisfy    
\begin{eqnarray} \nonumber
\P_0\left(T_k > \gamma_k  \right) =\frac{1}{2}.
\end{eqnarray}
In practice, the thresholds can be set in any way such that test statistic under $P_0$ falls below the threshold with fixed non-zero probability.

Intuitively, the procedure controls the probability of error as follows.  First, $\alpha$ can be made small by increasing $k_{\max}$; as each round is independent, $\alpha =(1/2)^{k_{\max}}$.  Of course, as $k_{\max}$ is increased, $\beta$ also increases.   Fortunately, as $k_{\max}$ grows, it can be shown that $\beta$ is strictly less than one  (provided the Kullback-Leibler divergence between the distributions is non-zero).   The following theorem quantifies the number of samples required to guarantee recovery of an index following $P_1$ as $\pi$ grows small.

\begin{thm}{\bf{Sequential Thresholding}}. \label{thm:SeqThres}
Sequential thresholding with ${k_{\max}} =\left  \lceil 2 \log_{2} \left(  \frac{1-\pi}{  \pi}\right) \right \rceil$ satisfies 
\begin{eqnarray} \nonumber
\lim_{\pi \rightarrow 0} P_e  = 0
\end{eqnarray}
 and 
 \begin{eqnarray} 
\nonumber
\lim_{\pi \rightarrow 0} \pi \;\E[N] \leq C
\end{eqnarray}
for some constant $C$ independent of $\pi$.
\end{thm}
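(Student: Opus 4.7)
The strategy is to bound separately the per-population Type~I error $\alpha$, the Type~II error $\beta$, and the expected per-population sample count $\E[N_1]$ of sequential thresholding, and then invoke the identities~\eqref{eqn:expNumMeas} and~\eqref{eqn:Pe}. The crux of the argument is to establish that $1-\beta$ is bounded below by a positive constant that does not depend on $\pi$ (and hence does not depend on $k_{\max}$).

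Bounding $\alpha$ is direct. Each of the $k_{\max}$ rounds uses a disjoint batch of samples, and the threshold $\gamma_k$ is chosen so that $\P_0(T_k>\gamma_k)=1/2$. Thus the rounds are independent Bernoulli$(1/2)$ trials under $P_0$ and $\alpha = 2^{-k_{\max}}$; the choice $k_{\max}=\lceil 2\log_2((1-\pi)/\pi)\rceil$ yields $\alpha \leq (\pi/(1-\pi))^2$. For $\beta$, write $p_k := \P_1(T_k\leq \gamma_k)$, so that $1-\beta = \prod_{k=1}^{k_{\max}} (1-p_k)$. The main technical step is to show that $\sum_{k\geq 1} p_k < \infty$, for then the infinite product converges to a strictly positive constant $c_\beta = c_\beta(P_0,P_1)$ and $1-\beta \geq c_\beta$ uniformly in $\pi$. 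Since $T_k$ is a sufficient statistic built from $k$ i.i.d.\ samples of $P_1$ while $\gamma_k$ tracks only the $P_0$-median of $T_k$, the gap between $\E_1[T_k]$ and $\gamma_k$ grows linearly in $k$; a Chernoff-type concentration bound, under mild tail conditions on the distributions (where the specific form of $P_0$ and $P_1$ enters), then yields $p_k \leq e^{-ck}$ for some $c = c(P_0,P_1)>0$, which is more than enough for summability. Establishing this exponential decay is the main obstacle of the proof.

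With these ingredients in hand, the error claim follows from~\eqref{eqn:Pe}:
\begin{equation*}
P_e \;\leq\; \frac{\alpha(1-\pi)}{\pi(1-\beta)} \;\leq\; \frac{1}{c_\beta}\cdot\frac{\pi}{1-\pi} \;\longrightarrow\; 0 \quad \text{as } \pi\to 0.
\end{equation*}
For the sample complexity, decompose $\E[N_1] = (1-\pi)\,\E[N_1\mid X_1=0] + \pi\,\E[N_1\mid X_1=1]$. Under $X_1=0$ the procedure reaches round $k$ with probability $2^{-(k-1)}$ and then takes $k$ samples, so by linearity $\E[N_1\mid X_1=0] = \sum_{k=1}^{k_{\max}} k\,2^{-(k-1)} \leq 4$, while the crude bound $\E[N_1\mid X_1=1]\leq \sum_{k=1}^{k_{\max}} k = O(\log^2(1/\pi))$ handles the other term. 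Substituting into~\eqref{eqn:expNumMeas} and multiplying by $\pi$ gives
\begin{equation*}
\pi\,\E[N] \;\leq\; \frac{\E[N_1]}{1-\beta} \;\leq\; \frac{4(1-\pi)+\pi\cdot O(\log^2(1/\pi))}{c_\beta},
\end{equation*}
whose limit as $\pi\to 0$ is $4/c_\beta$, yielding the claimed constant $C$.
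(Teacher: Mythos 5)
Your proposal is correct and follows essentially the same route as the paper's Appendix C: bound $\alpha = 2^{-k_{\max}}$, show $1-\beta$ is bounded below by a $\pi$-independent constant via exponential decay of $\P_1(T_k \leq \gamma_k)$ (the paper formalizes your Chernoff-type step with Stein's Lemma, plus the observation that $\P_1(T_k\geq\gamma_k)\geq 1/2$ so the finitely many early factors stay positive), and then use $E_0\leq 4$ and $E_1 = O(k_{\max}^2)$ in the recursion \eqref{eqn:expNumMeas}. No substantive differences.
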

\begin{proof} 
See Appendix C.
\end{proof}

 \ \\ \noindent{\bf{Remark 3.}} Similar to the behavior of the SPRT discussed in Remark 1, sequential thresholding is also fairly insensitive to our prior knowledge of $\pi$, especially when we underestimate $\pi$. More specifically, overestimating $\pi$ increases the probability of error almost proportionally and has nearly no affect on $\mathbb{E}[N]$, while underestimating $\pi$ decreases the probability of error and the order of $\mathbb{E}[N]$ is the same as long as $\log(1/\tilde{\pi}) \leq 1/\pi$.

\vspace{.3cm}
\noindent{\bf{Remark 4.}}  For many distributions in the exponential family, the log-likelihood ratio, $L$, is a monotonic function of a test statistic $T$ that does not depend on parameters of  $P_1$. As a consequence of the sufficiency of $T$, the thresholds $\gamma_k$, $k = 1,\dots,k_{\max}$, depend only on $P_0$, making sequential thresholding suitable when knowledge about $P_1$ is not available.   

Perhaps most notably, in contrast to the SPRT based procedure, sequential thresholding does not aggregate statistics.  Roughly speaking, this results in increased robustness to modeling errors in $P_1$ at the cost of a sub-optimal procedure. Analysis of sequential thresholding in related sparse recovery problems can be found in \cite{MalloyISIT, AsilomarLimits}.

\section{Limitations of Non-Adaptive Procedures}
For our purposes a non-adaptive procedure tests each individual population with a pre-determined number of samples, denoted $N_0$.  In this case, the conditional number of samples for each individual test is simply $ \E[N_1|X_1 = 0] = \E[N_1|X_1 = 1] = N_0$, giving
\begin{eqnarray} \label{eqn:ENfixed}
\E[N] = \frac{N_0}{ \alpha (1- \pi) +\pi(1  - \beta) }.
\end{eqnarray}
To compare the sampling requirements of non-adaptive procedures to adaptive procedures, we present a necessary condition for reliable recovery.
The theorem implies that non-adaptive procedures require a factor of $\log \pi^{-1}$ more samples than the best adaptive procedures.  

\begin{thm}{\bf{Non-adaptive procedures}}. \label{thm:NSLB}
 Any non-adaptive procedure that satisfies
\begin{eqnarray} \nonumber
P_e \leq \frac{\delta}{1+\delta}
\end{eqnarray}
also has
\begin{eqnarray} \nonumber
\E[N] \geq  \frac{ \log\left(\frac{1}{2 \delta \pi } \right) -1 }{ \pi (1+\delta) D(P_1||P_0)} .
\end{eqnarray}
for $\delta \leq 1/2$.
\end{thm}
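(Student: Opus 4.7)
The plan is to combine equation~(\ref{eqn:ENfixed}) for $\E[N]$ with a data-processing lower bound on the per-population sample size $N_0$. A non-adaptive procedure is just an i.i.d.\ sequence of fixed-sample-size binary tests between $P_0^{N_0}$ and $P_1^{N_0}$ with error probabilities $(\alpha,\beta)$, so the two ingredients needed are (a) a lower bound on $N_0$ in terms of $\alpha$, $\beta$, and $D(P_1\|P_0)$, and (b) a way to express the denominator $\alpha(1-\pi)+\pi(1-\beta)$ in terms of $(1-\beta)$ alone using the error constraint.

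First I translate the hypothesis $P_e\le \delta/(1+\delta)$, via~(\ref{eqn:Pe}), into the equivalent inequality $\alpha(1-\pi)\le\delta\pi(1-\beta)$. Combined with $\pi\le 1/2$ this yields two consequences: (i) $(1-\beta)/\alpha\ge(1-\pi)/(\delta\pi)\ge 1/(2\delta\pi)$, and (ii) $\alpha(1-\pi)+\pi(1-\beta)\le(1+\delta)\pi(1-\beta)$. Plugging (ii) into (\ref{eqn:ENfixed}) gives $\E[N]\ge N_0/[(1+\delta)\pi(1-\beta)]$, so the problem reduces to lower-bounding $N_0/(1-\beta)$.

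Next I apply the data-processing inequality to the map that sends the $N_0$ samples of a single population to $\hat X\in\{0,1\}$:
\begin{equation*}
N_0\, D(P_1\|P_0) \;=\; D(P_1^{N_0}\|P_0^{N_0}) \;\ge\; d(1-\beta\,\|\,\alpha),
\end{equation*}
where $d(p\|q)$ denotes the binary relative entropy. The crux is to extract a clean factor of $(1-\beta)$ from $d(1-\beta\|\alpha)$ so that it cancels the $(1-\beta)$ sitting in the $\E[N]$ bound. Discarding $\beta\log\tfrac{1}{1-\alpha}\ge 0$ and applying the elementary inequality $\beta\log\beta\ge -(1-\beta)$ (which is just $x\log x\ge x-1$ at $x=\beta$) gives
\begin{equation*}
d(1-\beta\,\|\,\alpha) \;\ge\; (1-\beta)\!\left[\log\tfrac{1-\beta}{\alpha}-1\right].
\end{equation*}
Substituting bound (i) yields $d(1-\beta\,\|\,\alpha)\ge (1-\beta)[\log(1/(2\delta\pi))-1]$; the $(1-\beta)$ factor then cancels when dividing back through, producing exactly the stated inequality.

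The main obstacle is this cancellation: both the numerator in the $\E[N]$ bound (namely $N_0$) and the natural KL lower bound on $N_0$ carry a factor of $(1-\beta)$ whose magnitude is uncontrolled, so one needs the KL bound to be proportional to $(1-\beta)$ uniformly in $\beta$. A cruder estimate such as $-H(\beta)\ge -\log 2$ leaves a residual term of order $1/(1-\beta)$ that blows up as $\beta\to 1$; it is the sharper linear bound $\beta\log\beta\ge -(1-\beta)$ that makes $d(1-\beta\|\alpha)$ scale linearly with $(1-\beta)$ and is responsible for the clean additive constant $-1$ appearing in the final expression.
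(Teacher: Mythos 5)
Your proposal is correct and follows essentially the same route as the paper: the paper likewise reduces to $\E[N]\ge N_0/[(1+\delta)\pi(1-\beta)]$ via the error constraint, invokes the same information-theoretic lower bound $N_0\,D(P_1\|P_0)\ge d(1-\beta\|\alpha)$ (stated there as the standard sequential-analysis identity), and extracts the $(1-\beta)$ factor using exactly your inequalities $\beta\log\tfrac{1}{1-\alpha}\ge 0$ and $\beta\log\beta\ge-(1-\beta)$. The only difference is presentational (you phrase the per-test bound as a data-processing inequality on product measures), so no further comment is needed.
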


\begin{proof}
See Appendix D.
\end{proof}

\vspace{.3cm}
\noindent{\bf{Remark 5.}}
The lower bound  presented in Theorem \ref{thm:NSLB} implies that non-adaptive procedures require at best a multiplicative factor of $\log \pi^{-1}$ more samples than adaptive procedures (as adaptive procedures are able to come within a constant factor of the lower bound in Theorem \ref{thm:SeqLB}).  For problems with even modestly small values of $\pi$, this can result in non-adaptive sampling requirements many times larger than those required by adaptive sampling procedures.

\section{Examples and Numerical Results}

\vspace{.3cm}
\noindent {\bf{Example 1.}} \emph{Searching for a Gaussian with positive mean}. Consider searching for a population following $P_1 \sim \mathcal{N}(\mu,1)$ amongst a number of populations following $P_0 \sim \mathcal{N}(-\mu,1)$ for some $\mu > 0$. The Kullback-Leibler divergence between the Gaussian distributions is $D(P_0||P_1) = 2\mu^2$.  

Focusing on the S-SPRT (Alg. \ref{alg:ssprt}) and Theorem \ref{thm:SPRT},  we have an explicit expression for $C_1'$ (defined in (\ref{eqn:CC2})) based on the overshoot of the likelihood ratio \cite[page 145]{1945Wald}:
\begin{eqnarray} \label{eqn:example1C1}
C_1'(\mu) = 2\mu\left(\mu + \frac{e^{-\mu^2/2}}{\int_{-\mu}^\infty e^{-t^2/2}dt}\right).
\end{eqnarray}
In order to make our bound on $\mathbb{E}[N]$ as tight as possible, we would like to minimize $C_1$ from (\ref{eqn:CC2}) with respect to $\gl$.  Since the minimizer has no closed form expression, we use the sub-optimal value $\gl = 1/\mu$ for $\mu > 1$, and $\gl = 1-\sqrt{\mu}$ for $\mu <1$. For this choice of $\gl$, the constant $C_1 = C_1(\mu)$ in Theorem \ref{thm:SPRT} and (\ref{eqn:CC2}) is 
\begin{eqnarray} \nonumber
C_1(\mu)  = 
\begin{cases} \frac{C_1'(\mu) + \log(\mu)}{ (1-1/\mu)D(P_0||P_1)} &\mbox{if }  \mu > 1\\
\frac{C_1'(\mu) + \log((1-\sqrt{\mu})^{-1})}{\sqrt{\mu} D(P_0||P_1)}& \mbox{if } \mu <  1. 
\end{cases}  
\end{eqnarray} 
Consider the following two limits. First, as $\mu \rightarrow \infty$ 
\begin{eqnarray} \nonumber
\lim_{\mu \rightarrow \infty} C_1(\mu)=1.
\end{eqnarray}
As a consequence (from Corollary \ref{lem:SPRTS}) 
\begin{eqnarray} \nonumber
\lim_{\mu\rightarrow \infty} \lim_{\pi\rightarrow 0} \pi \: \mathbb{E}[N] \leq \lim_{\mu \rightarrow \infty} C_1(\mu) = 1.
\end{eqnarray}
This implies Corollary \ref{lem:LBrare} is tight in this regime.  As $\mu$ tends to infinity we approach the noise-free case, and the procedure is able to make perfect decisions with one sample per population.  As expected, the required number of samples grows as $1/\pi$.

Second, as $\mu \rightarrow 0$, 
\begin{eqnarray} \nonumber
\lim_{\mu \rightarrow 0} C_1(\mu) D(P_0||P_1) =1
\end{eqnarray}
which implies (again from Corollary \ref{lem:SPRTS}) 
\begin{eqnarray} \nonumber
\lim_{\mu\rightarrow 0} \lim_{\pi\rightarrow 0} \: \pi \: D(P_0||P_1) \: \mathbb{E}[N] \leq \lim_{\mu \rightarrow 0}  C_1(\mu) D(P_0||P_1) = 1.
\end{eqnarray}
Comparison to Corollary \ref{lem:LBrare} shows the bound is tight.  
For small $\pi$, the S-SPRT requires $1/(\pi D(P_0||P_1))$ samples as the distributions grow similar; no procedure can do better.

Fig. \ref{fig:Norm} plots the expected number of samples scaled by $\pi$ as a function of $\mu$.  Specifically, the figure displays four plots.  First, $\mu$ vs. $\pi \: \mathbb{E}[N]$ obtained from simulation of the S-SPRT procedure is plotted:  $\pi = 10^{-3}$, $\gl = 1$, $\gu = \frac{1 - \pi }{\pi \delta }$ and $\delta = 10^{-2}$.  
Second, the lower bound from Theorem \ref{thm:SeqLB} is shown.  For small $\pi$, from (\ref{eqn:thmMain}), any reliable procedure has
\begin{eqnarray} \nonumber
\mathbb{E}[N]  \gtrsim \frac{1}{\pi} \max \left(1, \frac{1}{D(P_0||P_1)}\right).
\end{eqnarray}   
The upper bound from Theorem \ref{thm:SPRT} is also plotted.  From (\ref{eqn:SSPRTbnd}), for small values of $\pi$, the S-SPRT achieves 
\begin{eqnarray} \nonumber
 \mathbb{E}[N]  \lesssim \frac{C_1}{\pi}.
\end{eqnarray}
where $C_1$ is calculated by minimizing (\ref{eqn:CC2}) over $\gamma_L \in (0,1)$ for each value of $\mu$.  $C_1$ is within a small factor of the lower bound for all values of $\mu$.  

Lastly, the performance of sequential thresholding (Alg. \ref{alg:sds}) is plotted.  The maximum number of round is specified as in Theorem \ref{thm:SeqThres}.

\begin{figure}[htb]
\centerline{
\includegraphics[width=9.7cm]{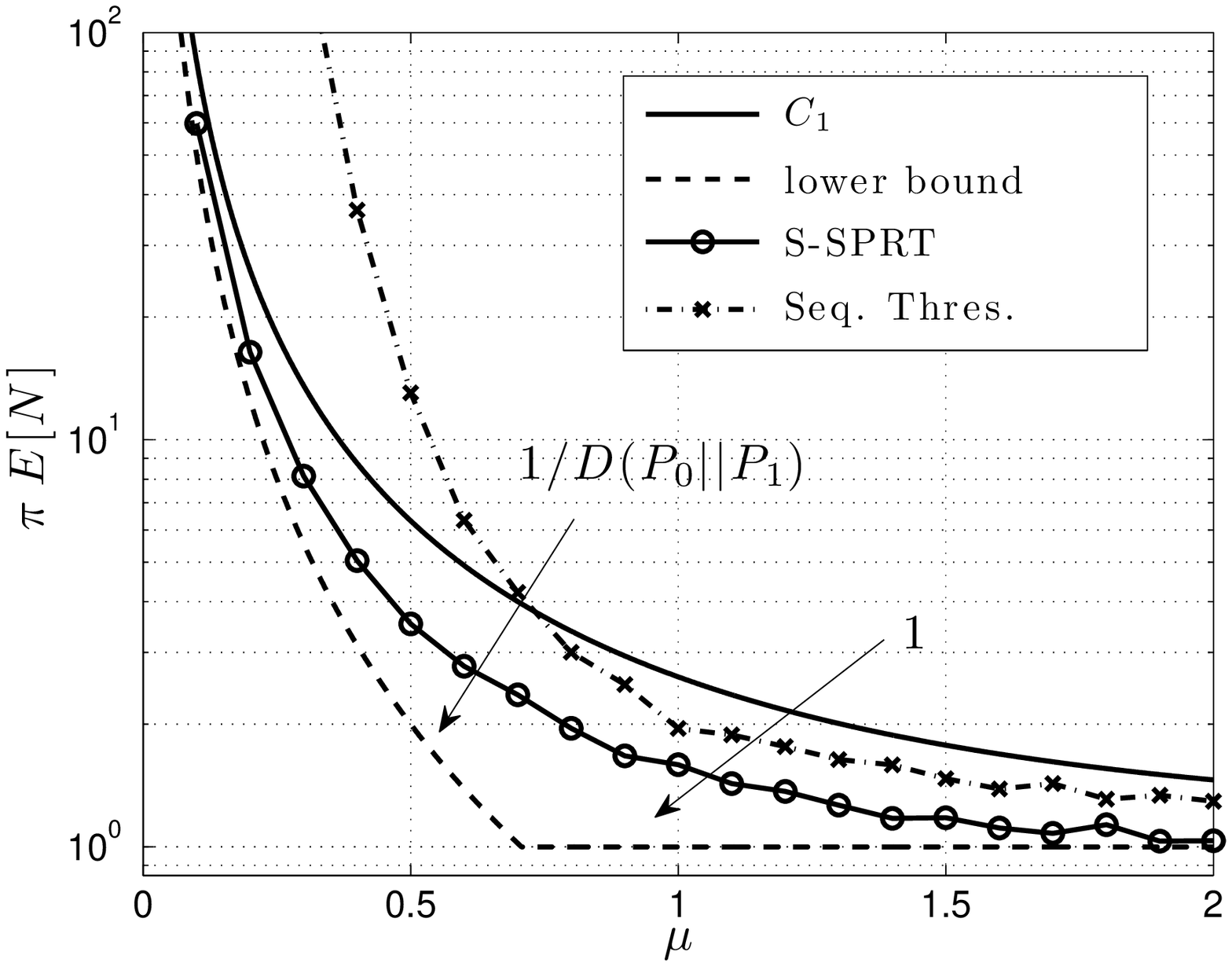}}
\caption{ \label{fig:Norm} Expected number of samples scaled by $\pi$ as a function of the mean of the atypical population, $\mu$, corresponding to example 1.  Simulation of the S-SPRT (Alg. \ref{alg:ssprt}) is plotted with the upper bound from Corollary \ref{lem:SPRTS} and lower bound from Corollary \ref{lem:LBrare}.  Sequential thresholding (Alg. \ref{alg:sds}). $\pi = 10^{-3}$,  $P_e \leq 10^{-2}$, $10^3$ trials for each value of $\mu$. }
\end{figure}


\vspace{.3cm}
\noindent {\bf{Example 2.}} \emph{Searching for a biased coin.}  Consider the problem of searching for a coin with bias towards heads of $1/2 + b$ amongst coins with bias towards heads of $1/2 - b$, for $b \in [0, 1/2]$. This problem was studied recently in \cite{2012arXiv1202.3639C}.

\begin{cor} {\bf{Biased Coin}}. \label{corr:baisedcoin}
The S-SPRT procedure (Alg. \ref{alg:sds})  with $\gl = \frac{1-2b}{1+2b}$  and $\gu = \frac{1-\pi}{\pi \delta}$ satisfies $P_e \leq \frac{\delta}{1+\delta}$ and 
\begin{eqnarray} \nonumber
\mathbb{E}[N] \leq \frac{1}{2b^2} \left(\frac{1}{\pi}   +  \log \left(\frac{1}{\pi \delta} \right) + 1\right).
\end{eqnarray}
\end{cor}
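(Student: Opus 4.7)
The plan is to obtain Corollary \ref{corr:baisedcoin} as a direct specialization of Theorem \ref{thm:SPRT} to the two-point (Bernoulli) setting, where $P_0$ has mean $1/2-b$ and $P_1$ has mean $1/2+b$. The error-probability bound $P_e \leq \delta/(1+\delta)$ is immediate from Theorem \ref{thm:SPRT} given the prescribed $\gu = (1-\pi)/(\pi\delta)$, so the work reduces to showing that each of the three terms in (\ref{eqn:SSPRTbnd}) is at most the corresponding term of $\frac{1}{2b^2}\bigl(\frac{1}{\pi}+\log\frac{1}{\pi\delta}+1\bigr)$.

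A direct calculation yields $D(P_0\|P_1) = D(P_1\|P_0) = 2b\log\frac{1+2b}{1-2b}$, and the elementary inequality $\log\frac{1+x}{1-x} \geq 2x$ for $x \in [0,1)$ applied with $x = 2b$ gives $D \geq 8b^2 \geq 2b^2$; this bounds the middle term of (\ref{eqn:SSPRTbnd}) by $\log(1/(\pi\delta))/(2b^2)$. For $C_1$ I would exploit the fact that the per-sample log-likelihood ratio takes only two values, $L = \pm a$ with $a := \log\frac{1+2b}{1-2b}$, so the overshoot constant in (\ref{eqn:CC2}) satisfies $C_1' \leq a$ (since $\E[L - r \mid L \geq r] = a - r \leq a$ for $r \in [0,a]$, and $\P(L \geq r) = 0$ for $r > a$). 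With the prescribed $\gl = (1-2b)/(1+2b) = e^{-a}$ we have $\log \gl^{-1} = a$ and $1-\gl = 4b/(1+2b)$, so (\ref{eqn:CC2}) gives $C_1 \leq (1+2b)/(4b^2) \leq 1/(2b^2)$, where the final step uses $b \leq 1/2$.

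The main obstacle is handling the residual constant $C_2$ in (\ref{eqn:SSPRTbnd}), which is not written out explicitly in Theorem \ref{thm:SPRT} and must be recovered by retracing the proof in Appendix B. I expect $C_2$ to aggregate bounded boundary quantities (overshoot contributions at each threshold and the expected sample cost on the terminating atypical population), all of which simplify sharply in the Bernoulli setting: the log-likelihood walk starts at $0$ and moves by $\pm a$, so it can only exit the interval $(\log \gl, \log \gu)$ on the low side from position $0$, where it lands exactly at $-a = \log \gl$ with zero overshoot, while the upper-side overshoot is at most $a$. Combining these observations with the bound $D(P_0\|P_1) \geq 8b^2$ and the fact that a $P_0$-coin is discarded on the first sample with probability at least $1/2+b$ should yield $C_2 \leq 1/(2b^2)$, and summing the three contributions produces the claimed inequality.
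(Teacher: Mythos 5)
Your proposal is correct and follows essentially the same route as the paper: instantiate Theorem \ref{thm:SPRT}, bound the two-valued log-likelihood overshoot by $a=\log\frac{1+2b}{1-2b}$, and compute $C_1=\frac{1+2b}{4b^2}\leq\frac{1}{2b^2}$ and $1/D(P_1\|P_0)\leq 1/(2b^2)$. The only piece you leave open, the constant $C_2$, is already explicit in Appendix B as $C_2=C_2'/D(P_1\|P_0)$ with $C_2'\leq a$, giving $C_2=1/(2b)\leq 1/(2b^2)$ directly (no need for the $D\geq 8b^2$ bound or the first-sample discard probability there), exactly as you anticipated.
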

\begin{proof}
The proof follows from evaluation of the constants in  Theorem \ref{thm:SPRT}.
The log-likelihood ratio corresponding to each sample (each coin flip) takes one of two values: if a coin reveals heads, $L = \log \frac{1 + 2 b}{1-2 b}$, and if a coin reveals tails, $L = \log \frac{1 - 2 b}{1+2 b}$.  When each individual SPRT terminates, it can exceed the threshold by no more than this value, giving,
\begin{eqnarray}  \nonumber
C_1'(b) = \log \frac{1 + 2 b}{1-2 b}  \qquad C_2'(b) = \log \frac{1 + 2 b}{1-2 b}
\end{eqnarray}
where $C_1'$ is defined in (\ref{eqn:CC2}), and $C'_2$ in (\ref{eqn:CC22}).
With $\gl = \frac{1 - 2 b}{1+2 b}$, we can directly calculate the constants in Theorem \ref{thm:SPRT}. From (\ref{eqn:CC2}),
\begin{eqnarray} \nonumber
{C_1(b)}= \frac{1+2b}{4b^2} \leq \frac{1}{2b^2}
\end{eqnarray}
as the Kullback-Leibler divergence is $D(P_0||P_1) = D(P_1||P_0) = 2b \log\frac{1+2b}{1-2b}$.
Also note $1/D(P_1||P_0) \leq 1/(2b^2)$.
Lastly, from (\ref{eqn:CC22}), 
\begin{eqnarray} \nonumber
C_2(b)= \frac{C_2'}{D(P_1||P_0)}  = \frac{1}{2b} \leq \frac{1}{2b^2}.
\end{eqnarray} 
Combining these with Theorem \ref{thm:SPRT} completes the proof.
\end{proof}

Comparison of Corollary \ref{corr:baisedcoin}  to  \cite[Theorem 2]{2012arXiv1202.3639C} shows the leading constant is a factor of 32 smaller in the bound presented here.

Moreover, closer inspection reveals that the constant $C_1(b)$ can be further tightened.  Specifically, note that when an individual SPRT estimates $\hat{X}_i = 0$ it must hit the lower threshold exactly (since $\gamma_L = {(1-2b)}/{(1+2b)}$). If we choose only values of $\delta$ such that the upper threshold is an integer multiple of the likelihood ratio (i.e., set $\log \gu =  k\log((1+2b)/(1-2b)) $ for some integer $k$) the overshoot here is also zero.  $C_1' = 0$ and $C_2' =0$, which then give
\begin{eqnarray} \label{eqn:C1bb}
{C_1(b)}=  \frac{1+2b}{8b^2}.
\end{eqnarray}
From Corollary \ref{lem:SPRTS},  
\begin{eqnarray}  \label{eqn:coinUB}
 \lim_{\pi \rightarrow 0} \pi \:\E[N]  \leq \frac{1+2b}{8b^2}.
\end{eqnarray}
For small $\pi$, the number of samples required by any procedure to reliably identify an atypical population is 
\begin{eqnarray} \nonumber
\E[N] \lesssim \frac{1}{\pi} \left(\frac{1+2b}{8b^2}\right).
\end{eqnarray}
If $b = 1/2$ (each coin flip is deterministic), $C_1 = 1$, and the expected number of samples grows as $1/ \pi$ as expected.   The upper bound in Corollary \ref{lem:SPRTS} and lower bound in Corollary \ref{corr:LB2} converge.

Likewise, as the bias of the coin becomes small, $\lim_{b\rightarrow 0} C_1(b) D(P_0||P_1) = 1$, and the expected number of samples needed to reliably identify an atypical population grows as $1/( \pi D(P_0||P_1))$.  Again the upper and lower bounds converge. 

Note that the S-SPRT procedure for testing the coins in this particular example is equivalent to a simple, intuitive procedure, which can be implemented as follows: beginning with coin $i$, and a scalar static $T = 0$, if heads appears, add $1$ to the statistic.  Likewise, if tails appears, subtract $1$ from the test statistic.  Continue to flip the coin until either \emph{1)} $T$ falls below 0, or \emph{2)} $T$ exceeds some upper threshold  (which controls the error rate).  If the statistic falls below 0, move to a new coin, and reset the count, i.e., set $T=0$; conversely if the statistic exceeds the upper threshold, terminate the procedure.  Note that any time the coin shows tails on the first flip, the procedure immediately moves to a new coin.  

\begin{figure}[htb]
\centerline{
\includegraphics[width=9.7cm]{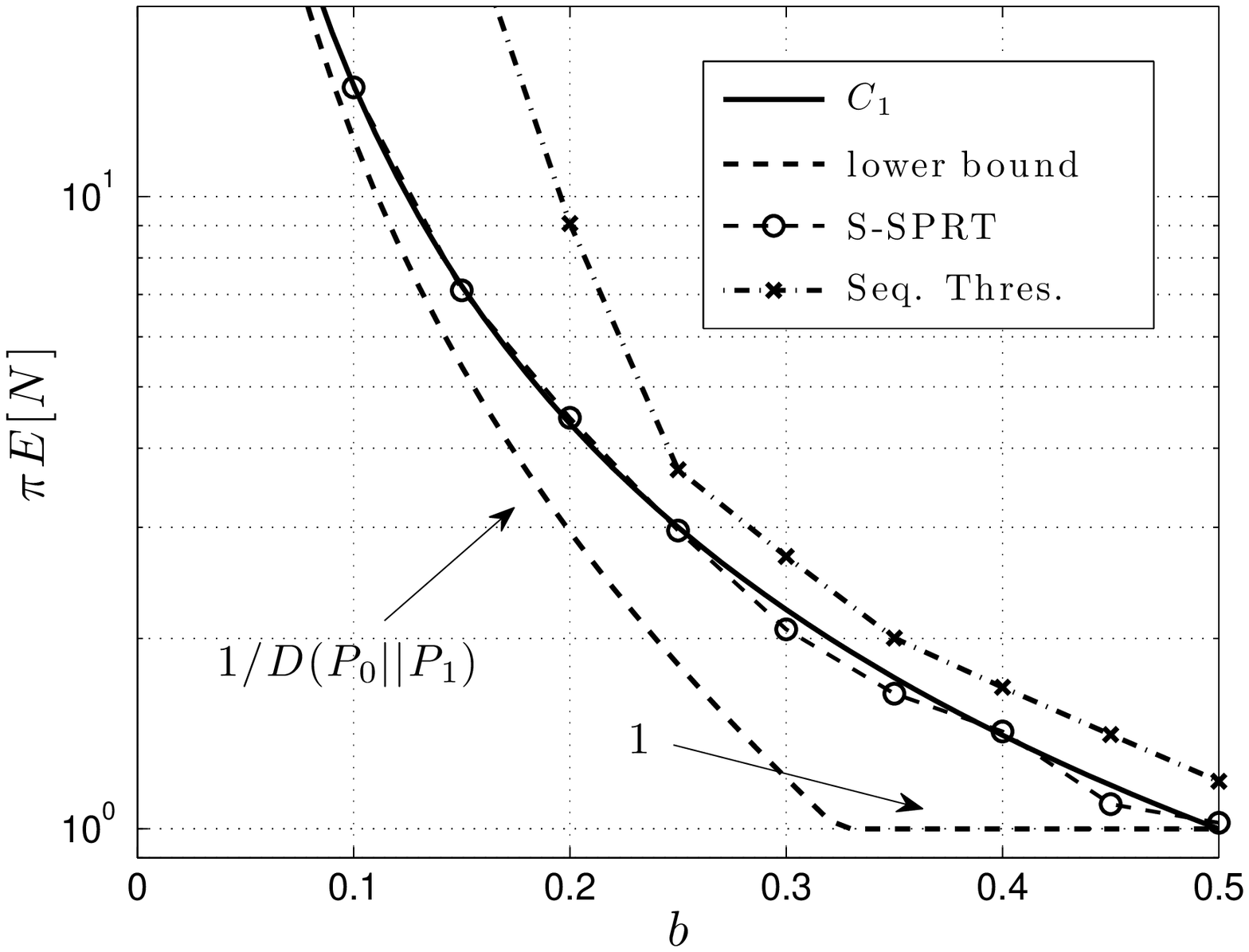}}
\caption{ \label{fig:Coin} Expected number of samples scaled by $\pi$ as a function of the bias of the coin corresponding to Example 2. Upper and lower bounds from  Corollaries \ref{lem:SPRTS} and \ref{lem:LBrare}.   Simulation of the S-SPRT (Alg. \ref{alg:ssprt}).   Simulation of sequential thresholding (Alg. \ref{alg:sds}).  $\pi = 10^{-3}$,  $P_e \leq 10^{-2}$, $10^3$ trials for each value of $b$.}
\end{figure}

Fig. \ref{fig:Coin} plots the expected number of samples scaled by $\pi$ as a function of the bias of the atypical coins, $b$.  The S-SPRT was simulated with the lower threshold set at $\gl =\log \frac{1 - 2 b}{1+2 b}$ for $\pi = 10^{-3}$ and $P_e \leq 10^{-2}$.  The upper and lower bounds from Corollaries \ref{lem:SPRTS} and \ref{corr:LB2} are also plotted.  The upper bound, $C_1$, is given by the expression in (\ref{eqn:C1bb}).

Notice that the simulated S-SPRT procedure appears to achieve the upper bound.  Closer inspection of the derivation of Theorem \ref{thm:SPRT} with $C'_1 = 0$ (as the \emph{overshoot} in (\ref{eqn:E0bnd11}) is zero), shows the bound on the number of samples required by the S-SPRT is indeed tight for the search for the biased coin.  Performance of sequential thresholding (Alg. \ref{alg:sds}) is included for comparison.

\section{Conclusion}
This paper explored the problem of finding an atypical population amongst a number of typical populations, a problem arising in many aspects of science and engineering.  

More specifically, this paper quantified the number of samples required to recover an atypical population with high probability.  We paid particular attention to problems in which the atypical populations themselves become increasingly rare.  After establishing a lower bound based on the Kullback Leibler divergence between the underlying distributions, the number of samples required by the optimal S-SPRT procedure was studied; the number of samples is within a constant factor of the lower bound, which can be explicitly derived in a number of cases.   Two common examples, where the distributions are Gaussian and Bernoulli, were studied.

Sequential thresholding, a more robust procedure that can often be implemented with less prior knowledge about the distributions was presented and analyzed in the context of the quickest search problem.   Sequential thresholding requires a constant factor more samples than the S-SPRT.  Both sequential thresholding and the SPRT procedure were shown to be fairly robust to modeling errors in the prior probability.   Lastly, for comparison, a lower bound for non-adaptive procedures was presented.

\section*{Appendix A}
\emph{Proof of Theorem \ref{thm:SeqLB}:}
Assume that $ P_e \leq \frac{\delta}{1+\delta}$ and from (\ref{eqn:Pe}) we have
\begin{eqnarray} \label{eqn:Seqalphimp}
\frac{\alpha(1-\pi)}{\pi(1-\beta)} \leq \delta.  
\end{eqnarray}
For ease of notation, define
\begin{eqnarray} \label{eqn:EOE1def}
E_{1} = \E[N_1|X_1 = 1]  \qquad E_{0} = \E[N_1|X_1 = 0].
\end{eqnarray}
 From (\ref{eqn:expNumMeas}),
\begin{eqnarray}  \nonumber
\mathbb{E}[N]  &=& \frac{\pi E_1 + (1-\pi) E_0 }{ \alpha (1- \pi) +\pi(1  - \beta) }  \geq  \frac{\pi E_1 + (1-\pi) E_0} {(1+\delta) \pi(1-\beta)   }   \\
& = &\frac{ E_1}{ (1+\delta)(1-\beta) } + \frac{ (1-\pi) E_0}{ (1+\delta) \pi (1-\beta)}. 
 \label{eqn:Ebnd3}
\end{eqnarray} 
From standard sequential analysis techniques (see \cite[Theorem 2.29]{SeqAnalysis}) we have the following identities relating the expected number of measurements to $\alpha$ and $\beta$, which hold for \emph{any} binary hypothesis testing procedure:
\begin{eqnarray} \label{eqn:E1bnd}
E_1 &\geq & \frac{\beta \log \left( \frac{\beta}{1-\alpha} \right)   + (1-\beta) \log\left( \frac{ 1- \beta}{\alpha}\right)  } { D(P_1||P_0)} \\ \label{eqn:E0bnd}
E_0 &\geq & \frac{\alpha \log \left( \frac{\alpha}{1-\beta} \right)   + (1-\alpha) \log\left( \frac{ 1- \alpha}{\beta}\right)  } { D(P_0||P_1)}  .
\end{eqnarray}

Rearranging (\ref{eqn:Ebnd3}),
\begin{eqnarray} \nonumber
\E[N] &\geq&  \underbrace{\frac{\beta \log \left( \frac{\beta}{1-\alpha} \right) }{(1+\delta)   (1-\beta)D(P_1||P_0)} }_{T_1}  +  \underbrace{ \frac{  \log\left( \frac{ 1- \beta}{\alpha}\right)}{ (1+\delta)D(P_1||P_0)}}_{T_2}  \\ \nonumber
&+& \underbrace{\frac{(1-\pi)\left(\alpha \log \left( \frac{\alpha}{1-\beta} \right)   + (1-\alpha) \log\left( \frac{ 1- \alpha}{\beta}\right) \right) } { \pi(1+\delta)(1-\beta)D(P_0||P_1)} }_{T_3} .
\end{eqnarray}
We first bound $T_1$ as
\begin{eqnarray} \label{eqn:T1bnd7}
T_1 \geq \frac{-1}{(1+\delta)D(P_1||P_0)} \geq \frac{-1}{D(P_1||P_0)}
\end{eqnarray}
since for all $\beta\in [0,1]$,
\begin{eqnarray} \nonumber
\frac{\beta \log \frac{\beta}{1-\alpha}}{1-\beta} \geq \frac{\beta \log {\beta}}{1-\beta}  \geq -1.
\end{eqnarray}
From (\ref{eqn:Seqalphimp}),
\begin{eqnarray} \nonumber
T_2 \geq \frac{  \log\left( \frac{ 1- \pi}{\pi \delta}\right)}{ (1+\delta)D(P_1||P_0)}.
\end{eqnarray}
Next, differentiating $T_3$ with respect to $\alpha$ gives 
\begin{eqnarray} \nonumber
\frac{d(T_3)}{d\alpha} = \frac{(1-\pi) \log \frac{\alpha \beta}{(1-\alpha)(1-\beta ) } }{ (1-\delta)\pi (1-\beta)D(P_0||P_1)}
\end{eqnarray}
 showing that the expression is non-increasing in $\alpha$ over the set of $\alpha$ satisfying
$\frac{\alpha}{1-\beta} \leq \frac{1-\alpha}{\beta}.$
From (\ref{eqn:Seqalphimp}), we are restricted to 
$\frac{\alpha}{1-\beta} \leq  \frac{\delta \pi  }{ 1 - \pi} $
and thus, if
$\frac{\delta \pi  }{ 1 - \pi} \leq \frac{1-\alpha}{\beta}$,
then (\ref{eqn:E0bnd}) is non-increasing in $\alpha$.  To show this, note that 
\begin{eqnarray} \nonumber
\frac {\delta \pi}{1-\pi} \leq \delta \leq 1-\delta \leq 1- \frac{\alpha(1-\pi)}{\pi(1-\beta)} \leq 1- \alpha \leq \frac{1-\alpha}{\beta}
\end{eqnarray} 
since both $\delta \leq 1/2$ and $\pi \leq 1/2$.
We can replace $\alpha$ in (\ref{eqn:E0bnd}) with $ \frac{\delta \pi (1-\beta)}{1 - \pi}$.  This gives 
\begin{eqnarray} \nonumber
T_3  &\geq &  \frac{ \delta \log \left(   \frac{\delta \pi}{1 - \pi} \right) }{(1+\delta)   D(P_0||P_1)  } + \\ &&  \frac{(1-\pi) \left( 1-\frac{\delta \pi(1-\beta)}{ 1-\pi} \right) \log\left( \frac{1}{\beta} -  \frac{\delta \pi (1-\beta)}{\beta(1 - \pi)} \right) }{ \pi D(P_0||P_1) (1+\delta) (1-\beta)   } \nonumber \\ \nonumber
 &\geq &  \frac{ \delta \log \left(   \frac{\delta \pi}{1 - \pi} \right) }{(1+\delta)   D(P_0||P_1)  } + \\ &&  \frac{(1-\pi) \left( 1-\delta \right) \log\left( \frac{1}{\beta} -    \frac{\delta (1-\beta) } {\beta } \right) }{ \pi D(P_0||P_1) (1+\delta) (1-\beta)   } \nonumber \\ 
&\geq& \frac{ \delta \log \left(   \frac{\delta \pi}{1 - \pi} \right) }{(1+\delta)   D(P_0||P_1)  }  +  \frac{ (1-\pi)\left( 1-\delta \right)^2  }{ \pi D(P_0||P_1) (1+\delta)  } \nonumber
\end{eqnarray}
where the first inequality follows from making the substitution for $\alpha$ and from (\ref{eqn:Seqalphimp}), and the second inequality follows since $\pi/(1-\pi) \leq 1$ and $1-\beta \leq 1$, and the last inequality follows as
\begin{eqnarray}\label{eqn:elebd}
\frac{\log\left(\frac{1}{\beta} - \frac{\delta(1-\beta)}{\beta}\right)}{(1-\beta)} \geq 1-\delta
\end{eqnarray}
 for all $\beta \in [0,1]$. To see the validity of \eqref{eqn:elebd}, we note
 \begin{eqnarray}
\frac{\log\left(\frac{1}{\beta} - \frac{\delta(1-\beta)}{\beta}\right)}{(1-\beta)(1-\delta)}& = &\frac{\log\left(1 + \frac{(1-\delta)(1-\beta)}{\beta}\right)}{(1-\beta)(1-\delta)}\nonumber\\
& \substack{(\star)\\ \geq} & \frac{\log\left(1+\frac{1-\beta}{\beta}\right)}{1-\beta}\nonumber\\
& = & \frac{\log(1/\beta)}{1-\beta}\nonumber\\
& \geq & 1. \nonumber
 \end{eqnarray}
Here ($\star$) follows by noting that $\log\left(1+x/\beta\right) / x$ is monotonically decreasing in $x$, and by setting $x = 1-\beta$.

We can also trivially bound $E_0$ by noting that $E_0 \geq 1$.  This provides an additionally bound on $T_3$:
\begin{eqnarray} \nonumber
T_3 &\geq& \frac{(1-\pi)}{\pi (1+\delta)(1-\beta)} \\
& \geq &\frac{ \delta \log \left(   \frac{\delta \pi}{1 - \pi} \right) }{(1+\delta)   D(P_0||P_1)  }  +  \frac{ (1-\pi)\left( 1-\delta \right)^2  }{ \pi (1+\delta)  } \label{eqn:bnd1samp}
\end{eqnarray}  
since the first term in (\ref{eqn:bnd1samp}) is strictly negative.

Combining the bounds on $T_1$ and $T_2$, and the two bounds on $T_3$, and noting that $\delta \pi /(1-\pi) \leq 2\delta \pi $  gives 
\begin{eqnarray} \nonumber
\E[N] & \geq &   \frac{1-\pi}{ \: \pi  } \frac{(1-\delta)^2}{(1+\delta)}   \max \left(1, \frac{1}{D(P_0||P_1)} \right)   + \\ \nonumber
&& \frac{\log\left( \frac{1}{2\pi \delta} \right)}{ D(P_1||P_0)} \left( \frac{1-\delta\frac{D(P_1||P_0)}{D(P_0||P_1)}  }{1+\delta} \right)    -  \frac{1}{D(P_1||P_0)}
\end{eqnarray}
completing the proof.

\section*{Appendix B}
\emph{Proof of Theorem \ref{thm:SPRT}:}
The proof is based on techniques used for analysis of the SPRT.  
From \cite{SeqAnalysis}, the false positive and false negative events are related to the thresholds as:
\begin{eqnarray} \label{eqn:SPRTalpha}
\alpha \leq \gu^{-1}(1- \beta) \leq \gu^{-1} = \frac{\pi \delta}{1-\pi}
\end{eqnarray}
\begin{eqnarray} \label{eqn:SPRTbeta}
\beta \leq \gl(1-\alpha) \leq \gl.
\end{eqnarray}
From (\ref{eqn:Pe}) the probability the procedure terminates in error returning a population following $P_0$ is
\begin{eqnarray} 
P_e & \leq &  \frac{1}{1+\frac{\pi}{\gu^{-1}(1-\pi)}} 
= \frac{\delta}{1+\delta}.\label{eqn:pe}
\end{eqnarray}  
To show the second part of the theorem, first define the log-likelihood ratio as
\begin{eqnarray} \label{eqn:LogLR}
L_i^{(j)} = \sum_{k=1}^j \log \frac{P_1(Y_{i,k})}{ P_0(Y_{i,k})}.
\end{eqnarray}
For ease of notation, let $E_0$ and $E_1$ be defined in (\ref{eqn:EOE1def}).
By Wald's identity \cite{SeqAnalysis},
\begin{eqnarray} \nonumber
E_0 & = & \frac{- \E_0 \left[  L_i^{(N_i)}  \right] }{D(P_0||P_1)   }  \quad = \\ \nonumber
& &  \hspace{-.8cm} \frac{ (1-\alpha) \E_0 \left[ \left. -L_i^{(N_i)} \right \vert \hat{X} = 0 \right] + \alpha  \E_0 \left[ \left.- L_i^{(N_i)} \right \vert \hat{X} = 1 \right] }{D(P_0||P_1)}.
\end{eqnarray}
The expected value of the log-likelihood ratio after $N_i$ samples (i.e, when the procedure stops sampling index $i$) is often approximated by the stopping boundaries themselves (see \cite{SeqAnalysis}).  In our case, it is sufficient to show the value of the likelihood ratio when the procedure terminates or moves to the next index can be bound by a constant independent of $\pi$ and $\delta$.  From \cite[Eqns. 4.9 and 4.10]{Wald1948}, for $C_1' \geq 0$,
\begin{eqnarray} \label{eqn:E0bnd11}
\E_0 \left[ \left. L_i^{(N_i)} \right \vert \hat{X} = 0 \right] \geq \log \gl - C_1' 
\end{eqnarray}
and 
\begin{eqnarray} \label{eqn:E1bnd11}
\E_0 \left[ \left. L_i^{(N_i)} \right \vert \hat{X} = 1 \right] \leq \log \gu + C_1' 
\end{eqnarray}
where $C_1'$ is any constant that satisfies both
\begin{eqnarray} \nonumber
C_1' \leq \max_{r \geq 0 } \; \E_0\left[L^{(1)} - r \left \vert L^{(1)} \geq r \right.  \right]
\end{eqnarray}
and
\begin{eqnarray} \nonumber
C_1' \leq \max_{r \geq 0 } \; \E_0\left[- (L^{(1)} + r) \left \vert L^{(1)} \leq -r  \right.  \right].
\end{eqnarray}
$C_1'$ depends only on the distribution of $L^{(1)}$, and is trivially independent of  $\gl$ and $\gu$.  Under the assumptions of (\ref{eqn:ass1}) and (\ref{eqn:ass2}), the constants are finite.  $C_1'$ can be explicitly calculated for a variety of problems (see Examples 1 and 2, and \cite{1945Wald, SPRTbounds1960}).  $C_1'$ is a bound on the \emph{overshoot} in the log-likelihood ratio when it falls outside $\gamma_U$ or $\gamma_L$.  We have 
\begin{eqnarray} \nonumber
E_0 &\leq&  \frac{ (1-\alpha) \left( C_1' + \log \gl^{-1}  \right)  + \alpha  \left( -C_1' + \log  \gu^{-1} \right)   }{D(P_0||P_1)} \\ \nonumber
&\leq &  \frac{ (1-\alpha)(C_1'  + \log \gl^{-1} )  }{D(P_0||P_1)}
\end{eqnarray}
where the second inequality follows as $\gu \geq1$.  Likewise,
\begin{eqnarray} \nonumber
E_1 &\leq&  \frac{ (1-\beta) ( C_2' + \log \gu  )  + \beta  ( -C_2' + \log  \gl )   }{D(P_1||P_0)} \\ \nonumber
&\leq &   \frac{  (1-\beta) (C_2'   + \log \gu  ) }{D(P_1||P_0)}
\end{eqnarray}
for some constant $C_2' \geq 0$ which represents the overshoot of the log-likelihood ratio given $X_i = 0$.
Combining these with (\ref{eqn:expNumMeas}) bounds the expected number of samples:
\begin{eqnarray} \label{eqn:EN55} \nonumber
\E[N] &=& \frac{\pi E_1 + (1-\pi) E_0 }{ \alpha (1- \pi) +\pi(1  - \beta) }   \\ \nonumber
 &\leq & \frac{    \pi  \frac{(1-\beta)( C_2' + \log \gu ) }{  D(P_1||P_0)} + (1-\pi)\frac{ (1-\alpha)( C_1'  + \log \gl^{-1}   ) } { D(P_0||P_1)} }{ \alpha(1-\pi) +  \pi(1-\beta) } \\ \nonumber
& \leq &  \frac{ C_2'+ \log \left(    \frac{1-\pi}{\pi \delta} \right)    }{ D(P_1||P_0)} +\frac{ C_1' + \log \gl^{-1} } {\pi  ( 1- \gl) D(P_0||P_1)} \\
&\leq & \frac{C_1}{\pi } + \frac{ \log \frac{1}{\pi \delta}}{D(P_1||P_0)} + {C_2}\nonumber
\end{eqnarray}
where the second inequality follows from dropping $\alpha(1-\pi)$ from the denominator, replacing $\beta$ with the bound in (\ref{eqn:SPRTbeta}), and dropping $(1-\alpha)(1-\pi)$ from the numerator of the second term.   The third inequality follows from defining
\begin{eqnarray}
 C_2 = \frac{C_2'}{D(P_1||P_0)} \label{eqn:CC22}
 \end{eqnarray}
 and 
\begin{eqnarray} \label{eqn:CC}
C_1 = \frac{C_1' + \log \gl^{-1}}{ (1-\gl)D(P_0||P_1)} 
\end{eqnarray}
completing the proof.

\section*{Appendix C}
\emph{Proof of Theorem \ref{thm:SeqThres}:} Employing sequential thresholding, the false positive event depends on the number of rounds as $\alpha = (1/2)^{k_{\max}}$.  With $k_{\max}$ as specified, we have $\alpha \leq \pi^2 /(1-\pi)^2$ and from (\ref{eqn:Pe}),
\begin{eqnarray} \nonumber
\P_e \leq \frac{\frac{\pi}{1-\pi}}{\frac{\pi}{1-\pi} + 1-\beta }.
\end{eqnarray}
Next, we show that $1-\beta$ is bound away from zero as $\pi$ becomes small.  Since $\lim_{\pi \rightarrow 0} k_{\max} = \infty$, 
\begin{eqnarray} \nonumber
\lim_{\pi \rightarrow 0} 1- \beta 
& =&  \lim_{k_{\max} \rightarrow \infty} \P_1\left( \bigcap_{k=1}^{k_{\max}} T_k \geq \gamma_k \right) \\ \nonumber
& = & \prod_{k=1}^{q} \P_1\left( T_k \geq \gamma_k \right)  \prod_{k=q+1}^\infty \P_1\left( T_k \geq \gamma_k \right)  \\
& > &  0. \label{eqn:STzero}
\end{eqnarray}
The last inequality can be seen as follows.  Since $\P_0\left(T_k \geq \gamma_k\right)$ is fixed by definition, we can apply Stein's Lemma \cite{Cover:1991:EIT:129837}.  For any $\epsilon > 0$, there exists an integer $q$ such that for all $k > q$,
\begin{eqnarray} \nonumber
\P_1\left( T_k \leq \gamma_k \right) \leq e^{-k (1-\epsilon) D(P_0||P_1) }.
\end{eqnarray}
This implies that for sufficiently large $q$,
\begin{eqnarray} \nonumber
 && \hspace{-1cm} \prod_{k=q+1}^\infty  \P_1\left( T_k \geq \gamma_k \right) \\ \nonumber
 & \geq &  \prod_{k=q+1}^\infty 1 - e^{-k (1-\epsilon) D(P_0||P_1) } \\ \nonumber
 & = & \exp\left(\sum_{k=q}^{\infty} \log\left( 1 - e^{-k (1-\epsilon) D(P_0||P_1)} \right) \right) \\ \nonumber 
 &{\geq} & \exp{(-C')}.
\end{eqnarray}
for some constant $C'>0$.  The last inequality follows as the inner sum is convergent to a finite negative value for any $(1-\epsilon)D(P_0||P_1) > 0$.  By assumption of non-zero Kullback-Leibler divergence,  $\P_1\left( T_k \geq \gamma_k \right)  \geq 1/2$ for all $k$.  For any fixed $q$, $\prod_{k=1}^{q} \P_1\left( T_k \geq \gamma_k \right) \geq 1/2^q >0$, and (\ref{eqn:STzero}) holds.  
Given the prescribed $k_{\max}$, we have
\begin{eqnarray}
\lim_{\pi \rightarrow 0} P_e = \lim_{\pi \rightarrow 0} \frac{\frac{\pi}{1-\pi}}{\frac{\pi}{1-\pi} + (1-\beta) } = 0.
\end{eqnarray}
The expected number of samples required for any index following $P_0$, defined in (\ref{eqn:EOE1def}), is given as
\begin{eqnarray} \nonumber
E_0 = \sum_{k=1}^{k_{\max}}   \frac{k}{2^{k-1}}   \leq 4.
\end{eqnarray}
On the other hand, the expected number of samples given the index follows $P_1$ is bound as:
\begin{eqnarray} \nonumber
E_1 \leq \sum_{k=1}^{k_{\max}} k_{\max}  \leq  k_{\max}^2  .
\end{eqnarray}
From (\ref{eqn:expNumMeas}) we have 
\begin{eqnarray} \nonumber
\mathbb{E}[N] &=&  \frac{\pi E_1 + (1-\pi) E_0 }{ \alpha (1- \pi) +\pi(1  - \beta) }  \\ \nonumber 
&\leq & \frac{\pi k_{\max}^2  + 4(1-\pi) }{ \alpha (1-\pi) + \pi (1-\beta)}\\  \nonumber
& \leq & \frac{ \pi k_{\max}^2 }{(1-\pi) (1-\beta)}  + \frac{ 4 (1-\pi)}{\pi (1-\beta)}
\end{eqnarray}
and finally 
\begin{eqnarray} \nonumber
\lim_{\pi \rightarrow 0} \pi \E[N] \leq C.
\end{eqnarray}

\section*{Appendix D}

\emph{Proof of Theorem \ref{thm:NSLB}:}
Assume that $ P_e \leq \frac{\delta}{1+\delta}$ and from (\ref{eqn:Pe}) we have
\begin{eqnarray} \label{eqn:NSalphimp}
\frac{ \alpha(1-\pi)}{\pi (1-\beta)}  \leq \delta .
\end{eqnarray}
From (\ref{eqn:ENfixed}),
\begin{eqnarray} \label{eqn:NOcond} \nonumber
\mathbb{E}[N]  &\geq &\frac{N_0} {\pi (1+\delta) (1-\beta)   }.
\end{eqnarray} 
Next, for any binary hypothesis test with false negative $\alpha$ and false positive $\beta$, the following identity holds:
\begin{eqnarray} \label{eqn:N0ref}
N_0 \geq \frac{\beta \log \left( \frac{\beta}{1-\alpha} \right)   + (1-\beta) \log\left( \frac{ 1- \beta}{\alpha}\right)  } { D(P_1||P_0)}.
\end{eqnarray}
To see (\ref{eqn:N0ref}), recall that for non-adaptive procedures, $N_0 = E_0 = E_1$, and thus both bounds in  (\ref{eqn:E1bnd}) and (\ref{eqn:E0bnd}) apply.  
This gives
\begin{eqnarray} \nonumber
\E[N] &\geq& \frac{\beta \log \left( \frac{\beta}{1-\alpha} \right)}{   \pi (1+\delta)(1-\beta) D(P_1||P_0)} + \frac{ \log\left( \frac{ 1- \beta}{\alpha}\right)  } { \pi (1+\delta) D(P_1||P_0)} \\ \nonumber
&\geq& \frac{ \log\left(\frac{1-\pi}{\delta \pi } \right) -1 }{ \pi (1+\delta) D(P_1||P_0)} \\ \nonumber
&\geq &  \frac{ \log\left(\frac{1}{2 \delta \pi } \right) -1 }{ \pi (1+\delta) D(P_1||P_0)}  
\end{eqnarray} 
where the second inequality follows from (\ref{eqn:T1bnd7}) and (\ref{eqn:NSalphimp}), and the last inequality as $\pi \leq 1/2$.

\bibliographystyle{IEEEtran}
\bibliography{bestSeq.bib}

\begin{IEEEbiographynophoto}{Matthew L. Malloy}(M'05)received the B.S. degree in Electrical and Computer Engineering from the University of Wisconsin in 2004, the M.S. degree from Stanford University in 2005 in electrical engineering, and the Ph.D. degree from the University of Wisconsin in December 2012 in electrical and computer engineering.   

Dr. Malloy currently holds a postdoctoral research position at the University of Wisconsin in the Wisconsin Institutes for Discovery.  
From 2005-2008, he was a radio frequency design engineering for Motorola in Arlington Heights, IL, USA.  In 2008 he received the Wisconsin Distinguished Graduate fellowship. In 2009, 2010 and 2011 he received the Innovative Signal Analysis fellowship.  

Dr. Malloy has served as a reviewer for the IEEE Transactions on Signal Processing, the IEEE Transactions on Information Theory, IEEE Transactions on Automatic Control and the Annals of Statistics.  He was awarded the best student paper award at the Asilomar Conference on Signals and Systems in 2011.   His current research interests include signal processing, estimation and detection, information theory, statistics and optimization, with applications in biology and communications.
\end{IEEEbiographynophoto}

\begin{IEEEbiographynophoto}{Gongguo Tang}(M,'09) received the B.Sc. degree in mathematics from the Shandong University, China, in 2003, the M.Sc. degree in systems science from the Chinese Academy of Sciences in 2006, and the Ph.D. degree in electrical and systems engineering from Washington University in St. Louis in 2011.

He is currently a Postdoctoral Research Associate at the Department of Electrical and Computer Engineering, University of Wisconsin-Madison. His research interests are in the area of signal processing, convex optimization, information theory and statistics, and their applications.
\end{IEEEbiographynophoto}

\begin{IEEEbiographynophoto}{Robert D. Nowak}(F,'10) received the B.S., M.S., and Ph.D. degrees in electrical engineering from the University of Wisconsin-Madison in 1990, 1992, and 1995, respectively.  He was a Postdoctoral Fellow at Rice University in 1995-1996, an Assistant Professor at Michigan State University from 1996-1999,  held Assistant and Associate Professor positions at Rice University from 1999-2003, and is now the McFarland-Bascom Professor of Engineering at the University of Wisconsin-Madison. 

Professor Nowak has held visiting positions at INRIA, Sophia-Antipolis (2001), and Trinity College, Cambridge (2010). He has served as an Associate Editor for the IEEE Transactions on Image Processing and the ACM Transactions on Sensor Networks, and as the Secretary of the SIAM Activity Group on Imaging Science. He was General Chair for the 2007 IEEE Statistical Signal Processing workshop and Technical Program Chair for the 2003 IEEE Statistical Signal Processing Workshop and the 2004 IEEE/ACM International Symposium on Information Processing in Sensor Networks. 

Professor Nowak received the General Electric Genius of Invention Award (1993), the National Science Foundation CAREER Award (1997), the Army Research Office Young Investigator Program  Award (1999), the Office of Naval Research Young Investigator Program Award (2000), the IEEE Signal Processing Society Young Author Best Paper Award (2000), the IEEE Signal Processing Society Best Paper Award (2011), and the ASPRS Talbert Abrams Paper Award (2012). He is a Fellow of the Institute of Electrical and Electronics Engineers (IEEE). His research interests include signal processing, machine learning, imaging and network science, and applications in communications, bioimaging, and systems biology.
\end{IEEEbiographynophoto}

\end{document}